\setlist{nolistsep} 
 \renewcommand*{\AC@hyperlink}[2]{%
   \begingroup
     \hypersetup{hidelinks}%
     \hyperlink{#1}{#2}%
   \endgroup
 }%
\newtheorem{corollary}{Corollary}
\newtheorem{proposition}{Proposition}
\newtheorem{definition}{Definition}
\newtheorem{theorem}{Theorem}
\newtheorem{lemma}{Lemma}
\newtheorem{problem}{Problem}
\newcommand{\e}{\ensuremath\mathrm{e}}
\renewcommand{\i}{\ensuremath\mathrm{i}}
\DeclareMathOperator{\Tr}{Tr}
\DeclareMathOperator{\diag}{diag}
\DeclareMathOperator{\U}{U}
\DeclareMathOperator{\Herm}{Herm} 
\newcommand{\CC}{\mathbb{C}}
\newcommand{\RR}{\mathbb{R}}
\newcommand{\ZZ}{\mathbb{Z}}
\newcommand{\NN}{\mathbb{N}}
\newcommand{\1}{\mathds{1}}
\newcommand{\C}{\mathbb{C}}
\newcommand{\mc}[1]{\mathcal{#1}}
\newcommand{\mcH}{\mc{H}}
\renewcommand{\H}{\mcH}
\newcommand{\argdot}{{\,\cdot\,}}
\renewcommand{\vec}[1]{\boldsymbol{#1}}
\newcommand{\myleft}{\mathopen{}\mathclose\bgroup\left}
\newcommand{\myright}{\aftergroup\egroup\right}
\newcommand{\norm}[1]{\left\Vert #1 \right\Vert} 
\newcommand{\iiiNorm}[1]{{\left\vert\kern-0.25ex\left\vert\kern-0.25ex\left\vert #1 
    \right\vert\kern-0.25ex\right\vert\kern-0.25ex\right\vert}}
\newcommand{\ketbra}[2]{\ket{#1} \!\! \bra{#2}}
\newcommand{\sandwich}[3]
  {\left\langle  #1 \right| #2 \left| #3 \right\rangle}
\newcommand{\func}[1]{{\ensuremath{\mathsf{#1}}}}
\newcommand{\class}[1]{{\ensuremath{\mathsf{#1}}}}
\newcommand{\NP}{\class{NP}}
\newcommand{\poly}{\func{poly}}
\newcommand{\MaxCut}{\class{MaxCut}}
\renewcommand{\P}{\class{P}}
\newcommand{\QMA}{\class{QMA}}
\newcommand{\QCMA}{\class{QCMA}}
\newcommand{\MA}{\class{MA}}
\newcommand{\APX}{\class{APX}}
\newcommand{\modnorm}[1]{\left|#1\right|_{\mathrm{mod}\; 2\pi}}
\newcommand{\OEx}[1]{\left<O(#1)\right>}
\newcommand{\nE}{\left|E(A)\right|}
\newcommand{\apval}{\Delta}
\newcommand{\apvali}{\delta}
\newcommand{\admat}{$A\in \{0,1\}^{d\times d}$}
\newcommand{\MCA}{\MaxCut(A)}
\newcommand{\MCAa}{\mu_a(A)}
\newcommand{\cov}{\Gamma}
\newcommand{\gs}{0}
\newcommand{\obfun}{\mu}
\DeclareMathOperator{\sw}{\Delta \lambda}
\newcommand{\hhu}{%
  Heinrich Heine University D{\"u}sseldorf, 
  Germany%
}
 \definecolor{martin}{rgb}{0,.4,1}
\renewcommand{\sw}{w}
\begin{document}

\title{Training variational quantum algorithms is NP-hard}

\author{Lennart Bittel}
\email{lennart.bittel@uni-duesseldorf.de}
\author{Martin Kliesch}
\email{mail@mkliesch.eu}
\affiliation{\hhu}

\begin{abstract}
\Aclp{VQA} are proposed to solve relevant computational problems on near term quantum devices. Popular versions are \aclp{VQE} and \aclp{QAOA} that solve ground state problems from quantum chemistry and binary optimization problems, respectively. They are based on the idea of using a classical computer to train a parameterized quantum circuit. 

We show that the corresponding classical optimization problems are \NP-hard.  
Moreover, the hardness is robust in the sense that, for every polynomial time algorithm, there are instances for which the relative error resulting from the classical optimization problem can be arbitrarily large 
assuming $\P \neq \NP$. 
Even for classically tractable systems composed of only logarithmically many qubits or free fermions, we show the optimization to be \NP-hard. 
This elucidates that the classical optimization is intrinsically hard and does not merely inherit the hardness from the ground state problem. 

Our analysis shows that the training landscape can have many far from optimal persistent local minima. This means that gradient and higher order descent algorithms will generally converge to far from optimal solutions. 
\end{abstract}

\maketitle


\acresetall
\section{Introduction}
Recent years have seen enormous progress toward large-scale quantum computation. 
A central goal of this effort is the implementation of a type of quantum computation that solves computational problems of practical relevance faster than any classical computer. However, the noisy nature of quantum gates and the high overhead cost of noise reduction and error correction
limit near term devices to shallow circuits~\cite{Pre18}. 

\Acp{VQA} have been proposed to bring us a step closer to this goal. 
Here, an optimization problem is captured by a loss function given by expectation values of observables w.r.t.\ states generated from a parametrized quantum circuit. 
Then a classical computer trains the quantum circuit by optimizing the expectation value over the circuit's parameters. Figure~\ref{fig:sketch} illustrates a possible \ac{VQA} routine.
Popular candidates to be used on near term devices are \acp{QAOA} \cite{FarGolGut14}
and \acp{VQE} \cite{PerMcCSha14}; see Ref.~\cite{Cerezo20VariationalQuantumAlgorithms} for a review.  

\acp{VQE} are proposed, for instance, to solve electronic structure problems, which are central to quantum chemistry and material science. 
Proposals of \acp{QAOA} include improved algorithms for quadratic optimization problems over binary variables such as the problem of finding the maximum cut of a graph (\MaxCut). 
For hybrid classical-quantum computation to be successful, two challenges need to be overcome. 
First, one needs to find parameterized quantum circuits that have the expressive power to yield a sufficiently good approximation to the optimal solution of relevant optimization problems (i.e., the model mismatch is small). 
Second, the classical optimization over the parameters of the quantum circuit needs to be solved quickly enough and with sufficient accuracy. 
We will focus on this second challenge. 

\begin{figure}
	
	\includegraphics[width=1\linewidth]{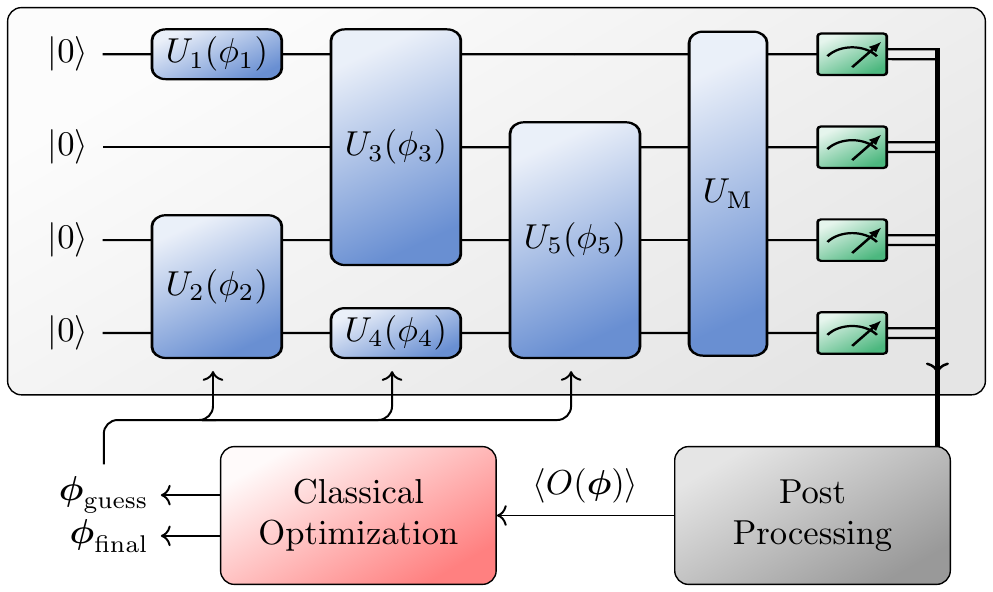}
	\caption{Sketch of a \ac{VQA} optimization routine. 
		This work addresses the complexity of the classical optimization part (red). 
	}
	\label{fig:sketch}
\end{figure}

For the classical optimization several heuristic approaches are known, most of which are based on gradient descent ideas and higher order methods. 
This is convenient, as with the parameter shift rule \cite{Schuld19EvaluatingAnalyticGradients}- the gradient can be calculated efficiently.
Methods include standard BFGS optimization and extensions \cite{Byrd1995LimitedMemoryAlgorithm} and 
natural gradient descent \cite{Stokes2020QuantumNaturalGradient}, which has a favorable performance for at least certain easy instances \cite{Wierichs2020AvoidingLocalMinima}. 
Second order methods require significant overhead in the number of measurements but can yield better accuracy \cite{Mari2020EstimatingTheGradient}. 
Quantum analytic descent \cite{Koczor2020QuantumAnalyticDescent} uses certain classical approximations of the objective function in order to reduce the number of quantum circuit evaluations at the cost of a higher classical computation effort. 

However, it has also been shown recently that there are certain obstacles that need to be overcome to render the classical optimization successful. 
The training landscape can have so-called \emph{barren plateaus} where the loss function is effectively constant and hence yields a vanishing gradient, which prevents efficient training. 
This phenomenon can be caused, for example by random initializations \cite{McClean2018BarrenPlateausIn} and nonlocality of the observable defining the loss function \cite{Cerezo20CostFunctionDependent}. 
Also, sources of randomness given by noise in the gate implementations can cause similar effects \cite{Wang2020Noise-InducedBarren}. 
Moreover, the problem of barren plateaus cannot be fully resolved by higher order methods \cite{Cerezo20ImpactOfBarren}. 

In this work, we show that the existence of persistent local minima can also render the training of \acp{VQA} infeasible. 
For this purpose, we encode the \NP-hard \MaxCut\ problem into the corresponding classical optimization task for several versions of \acp{VQA},
which have many far from optimal local minima. 

Specifically, we obtain hardness results concerning the optimization in four different settings: 
(i) We use an oracle description of a quantum computer and show that the classical optimization of \acs{VQA} is an \NP-hard problem, even if it needs to be solved only within constant relative precision. 
Next, we remove the oracle from the problem formulation by focusing on classically tractable systems where the underlying ground state problem is efficiently solvable. 
Here, we consider quantum systems where (ii) the Hilbert space dimension scales polynomially in the number of parameters (i.e., logarithmically many qubits) or (iii) is composed of free fermions. 
(iv) If the setup is restricted to the \ac{QAOA} type, we show that our hardness results also hold. 

\subsection{Connection to complexity theory}
The decision version of \ac{VQA} optimization is in the complexity class \QCMA, problems that can be verified with a classical proof on a quantum computer. 
The class \QMA, which allows for the proof to be a quantum state, contains \QCMA. 
Much about the relationship between classical \MA, \QCMA, and \QMA\ is still unknown. 
Notably, finding the ground state energy of a local Hamiltonian is \QMA-hard \cite{KitSheVya02,Kempe04TheComplexityOf}. 
This means that if $\QCMA\neq \QMA$, then \ac{VQA} algorithms will not be able to solve the local Hamiltonian problem, but only problems contained in \QCMA. 
Our results imply that even if the relevant energy eigenstates are contained in the \ac{VQA} ansatz, the classical optimization may still be at least as difficult as solving \NP\ problems (Section~\ref{sec:VQAopt_q}). 

\subsection{Notation}
We use the notation $[n]\coloneqq \{1,\dots,n\}$. 
The Pauli matrices are denoted by $\sigma_x$, $\sigma_y$, and $\sigma_z$. 
An operator $X$ acting on subsystem $j$ of a larger quantum system is denoted by $X^{(j)}$ -
e.g., $\sigma_x^{(1)}$ is the Pauli-$x$ matrix acting on subsystem $1$. 
By $\|X\|$ we refer to the operator norm of operator $X$. 

The number of edges of the graph with the adjacency matrix $A$ is denoted by $\nE$. 
By $\MCA$ we denote the solution of \MaxCut\ for an adjacency matrix $A$; see Problem~\ref{p:maxcut}. 

Throughout, we consider only adjacency matrices $A$ of undirected, unweighted graphs with at least one edge; i.e., \admat\ is a nonzero symmetric binary matrix with a vanishing diagonal. 


\section{A continuous \texorpdfstring{\NoCaseChange{\MaxCut}}{MaxCut} optimization}
We introduce a continuous, trigonometric problem which we show to be \NP-hard to optimize and approximate. 
This is related to earlier work on the optimization of trigonometric functions~\cite{Pfister2018BoundingMultivariateTrigonometric} for which \NP-hardness is known. For the specific class of functions, we also show the existence of an approximation ratio explicitly. 
Below, we use this problem to obtain hardness results for various \ac{VQA} versions. 

	\begin{problem}[\MaxCut]\label{p:maxcut}\hfill\vspace{-.8\baselineskip} \\
		\begin{description}[noitemsep,leftmargin=0.5cm,font=\normalfont]
			\item [Instance] 
			The adjacency matrix \admat\ of an unweighted undirected graph. 

			\item [Task]
			Find $S\subset [n]$ that maximizes $\sum_{i\in S,j\in [n]\setminus S} A_{i,j}$. 
		\end{description}
	\end{problem}
	{\MaxCut} is famously known to be \NP-hard.
	Additionally \MaxCut\ is \APX -hard, meaning that every polynomial time approximation algorithm
	there exist some instances, where the approximation ratio $\alpha$, the ratio between the algorithmic solution and the optimal solution, is bounded by $\alpha\leq \alpha_{\max}<1$, assuming that $\P\neq \NP$. 
	It was shown that if the unique games conjecture is true, then the best approximation ratio of a polynomial algorithm is $\alpha_{\max}=\min_{0<\theta<\pi}\frac{\theta/\pi}{(1-\cos(\theta)/2)}\approx 0.8786$~\cite{khot_optimal_2007}, which is also what the best known algorithms can guarantee~\cite{goemans_improved_1995}. Without the use this conjecture, it has been proven that $\alpha_{\max}\leq\frac{16}{17}\approx0.941$~\cite{hastad_optimal_2001}.
For our purposes we define a continuous, trigonometric version of \MaxCut. 
Minima of real valued functions are given by real numbers that may not have an efficient numerical representation. 
However, it is commonly said that a minimization problem is solved if it is solved to exponential precision, which is the convention we will also be using throughout this paper. The intuitive notion is that the hardness does not come from the difficulty of representing the minimum.
	\begin{problem}[Continuous \MaxCut]\label{p:shift_maxcut}\hfill\vspace{-.8\baselineskip} \\
		\begin{description}[noitemsep,leftmargin=0.5cm,font=\normalfont]
			\item [Instance] 
			The adjacency matrix \admat\ of an unweighted graph.
			
			\item [Task] 
			Find $\vec\phi\in[0,2\pi)^d$ that minimizes 
			\begin{equation}
				\obfun (\vec \phi)\coloneqq \frac{1}{4}\sum_{i,j=1}^dA_{i,j}[\cos(\phi_i)\cos(\phi_j)-1].
			\end{equation}
		\end{description}
	\end{problem}
\begin{lemma}\label{lem:continuous_MaxCut}
	Problem~\ref{p:shift_maxcut} is \NP-hard. 
	Moreover, if $\P\neq\NP$, 
	for every polynomial time algorithm there exists an approximation ratio, which is at most 
	that
	of \MaxCut. 
\end{lemma}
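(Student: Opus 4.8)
The plan is to reduce \MaxCut{} directly to Problem~\ref{p:shift_maxcut} through a value-preserving correspondence between cuts and minimizers of $\obfun$. First I would substitute $x_i\coloneqq\cos(\phi_i)$, which ranges over $[-1,1]$ as $\phi_i$ ranges over $[0,2\pi)$. Using $\sum_{i,j}A_{i,j}=2\nE$, the objective becomes
\begin{equation}
  \obfun(\vec\phi)=\tfrac14\,\vec x^{\T} A\,\vec x-\tfrac{\nE}{2},
\end{equation}
so minimizing $\obfun$ over $[0,2\pi)^d$ is equivalent to minimizing the quadratic form $\vec x^{\T}A\vec x$ over the box $[-1,1]^d$. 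On the vertices $\vec x\in\{-1,1\}^d$, which correspond to bipartitions $S=\{i:x_i=-1\}$, a direct computation gives $\vec x^{\T}A\vec x = 2(\nE-2C)$ where $C$ is the cut value, so that $-\obfun$ equals exactly the cut value $\sum_{i\in S,\,j\notin S}A_{i,j}$.

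The crucial structural observation is that, because $A$ has a vanishing diagonal, the form $\vec x^{\T}A\vec x$ is \emph{affine in each coordinate $x_i$ separately} (the term $A_{ii}x_i^2$ is absent). Hence, holding all but one coordinate fixed, $\obfun$ is minimized by pushing that coordinate to an endpoint $\pm1$; iterating over all coordinates shows the minimum over the box is attained at a vertex and equals $-\MCA$, so the relaxation is tight: $\min_{\vec\phi}\obfun(\vec\phi)=-\MCA$. The same coordinatewise affineness yields an efficient \emph{rounding}: given any $\vec x\in[-1,1]^d$, sweeping once through the coordinates and moving each to the endpoint that does not increase $\obfun$ produces a genuine bipartition $\vec s\in\{-1,1\}^d$ with $-\obfun(\vec s)\ge-\obfun(\vec x)$.

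\NP-hardness now follows: solving Problem~\ref{p:shift_maxcut} to exponential precision returns $\vec\phi$ with $-\obfun(\vec\phi)\ge\MCA-2^{-\poly}$, and rounding it to a cut $\vec s$ as above gives an \emph{integer} cut value $-\obfun(\vec s)\ge-\obfun(\vec\phi)>\MCA-1$, hence equal to $\MCA$. Thus an efficient solver for the continuous problem would solve \MaxCut{}. For the approximation statement I would read $-\obfun(\vec\phi)\in[0,\MCA]$ as the (nonnegative) cut value realized by $\vec\phi$, with optimum $\MCA$. If a polynomial-time algorithm returned $\vec\phi$ with $-\obfun(\vec\phi)\ge\alpha\,\MCA$, then the rounding would produce in polynomial time a bipartition of cut value at least $-\obfun(\vec\phi)\ge\alpha\,\MCA$, i.e.\ a \MaxCut{} algorithm with the same ratio $\alpha$. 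Since \MaxCut{} is \APX-hard, no such $\alpha$ can beat its threshold under $\P\neq\NP$, so the best achievable ratio for Problem~\ref{p:shift_maxcut} is at most that of \MaxCut{}.

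I expect the only real content to be the two consequences of coordinatewise affineness—tightness of the relaxation and the rounding guarantee—everything else being constant bookkeeping and the integrality/precision argument. The point to get right is that greedily rounding one coordinate at a time never increases $\obfun$ even as previously rounded coordinates are held fixed; this is exactly what the multiaffine (zero-diagonal) structure ensures, and it is the step where the continuous relaxation is prevented from producing a value strictly below $-\MCA$.
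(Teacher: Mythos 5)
Your proposal is correct and takes essentially the same route as the paper: the substitution $x_i=\cos(\phi_i)$ is just a rephrasing of the paper's observation that, due to the vanishing diagonal of $A$, the objective is affine in each $\cos(\phi_i)$, so coordinatewise rounding to $\phi_i\in\{0,\pi\}$ (equivalently $x_i\in\{-1,1\}$) never increases $\obfun$ and yields the discrete \MaxCut{} problem, giving both the many-one reduction and the preservation of approximation ratios. Your version is slightly more explicit about the tightness of the box relaxation and the exponential-precision-to-integer-cut bookkeeping, but the key lemma and argument structure are identical to the paper's.
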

\begin{proof}
 We will show that it suffices to look at $\vec\phi$ from the discrete subset $\{0,\pi\}^d$.  
 For this purpose, we analyze the dependence of $\obfun$ on one coordinate $\phi_i$ of $\vec \phi$. 
 Denoting the vector obtained from $\vec\phi$ by replacing $\phi_i$ with $x$
 by $\left.\vec\phi\right|_{\phi_i=x}$, 
 we write $\obfun$ as 
 \begin{align}
 \obfun(\left. \vec \phi\right|_{\phi_i=x})
 =
 \cos(x) \left[\frac{1}{2}\sum_{j:\; j\neq i} A_{i,j}\cos(\phi_j)\right]+C\,,
 \end{align}
 where we have used that $A_{i,i} = 0$ for all $i$, $A=A^T$ and $C$ is independent of $x$. 
 Since the only dependence on $x$ is given by the cosine, 
 it follows that for any $\vec \phi\in [0,2\pi)^d$
 \begin{equation}
 \obfun(\vec\phi)\geq\min\{\obfun(\left.\vec\phi\right|_{\phi_i=0}),\obfun(\left.\vec\phi\right|_{\phi_i=\pi})\}\,. 
 \end{equation} 	
 This observation implies that we can iteratively choose each $\phi_i$ to be in $\{0,\pi\}$ for $i\in[d]$, without increasing $\obfun$. Therefore, an algorithm returning continuous values of $\phi_i$ can be turned discrete in polynomial time without reducing the approximating power.
 The discrete problem can be written as a quadratic unconstrained binary optimization
 \begin{equation}
 \begin{aligned}
 \min_{\vec\phi\in \{0,\pi\}^d}\obfun(\vec \phi)
 &=
 \frac{1}{4}\min_{\vec v \in\{-1,1\}^d} \sum_{i, j=1}^dA_{i,j}(v_iv_j-1)\\
 &=\frac{1}{4}
 \min_{\vec v \in\{-1,1\}^d} \sum_{i,j=1}^d A_{i,j}\left(-2\delta_{v_i\neq v_j}\right)\\
 &=-\max_{S \subset [d]} \sum_{i\in S,j\in [d]\setminus S} A_{i,j} \, ,
 \end{aligned}
 \end{equation}
 where $\delta_{v_i\neq v_j}=1-\delta_{v_i,v_j}$ and we used again that $A$ is symmetric. 
 Therefore, we obtain a  (many-one) reduction of {\MaxCut} to Problem~\ref{p:shift_maxcut}, implying Problem~\ref{p:shift_maxcut} is \NP-hard, which finishes the proof. 
\end{proof}

We note that the derivative and the Hessian of $\mu$ is given by
\begin{equation}
\begin{aligned}
\frac{\partial\obfun(\vec\phi)}{\partial \phi_i}
&=-\frac 12 \sin(\phi_i)\sum_{j:\; j\neq i} A_{i,j}\cos(\phi_j)\\
\frac{\partial^2\obfun(\vec\phi)}{\partial \phi_i\partial \phi_k}
&=-\frac 12\delta_{i,k} \cos(\phi_i)\sum_{j:\; j\neq i} A_{i,j}\cos(\phi_j)\\
&+\frac{A_{i,j}}{2} \sin(\phi_i)\sin(\phi_k) \, .
\end{aligned}
\end{equation}
At the relevant discrete set $\vec\phi\in\{0,\pi\}^d$, the derivative vanishes and the Hessian is diagonal,
meaning a point in the set describes a local minimum whenever changing any single $\phi_i$ increases the objective function. 
The same minima (in the discrete \MaxCut\ formulation) are also achieved by a greedy algorithm: 
start with a random bipartition of the vertex set, then repeatably change a single vertex assignment if it increases the cut until the cut cannot be increased any further by this update rule. 
This algorithm has an approximation ratio of $\alpha=\frac{1}{2}$, meaning it only guarantees to approximate \MaxCut\ to half its optimal solution.
If Problem~\ref{p:shift_maxcut} is solved with gradient based methods, any local minimum can be the final result, therefore gradient based algorithms also only have an approximation ratio of $\alpha=\frac 1 2$, which is significantly worse than what modern \MaxCut\ solvers can achieve~\cite{goemans_improved_1995}.

\section{\texorpdfstring{\Aclp{VQA}}{VQAs}}
\Ac{VQA} is a general framework of hybrid quantum computers, where classically tunable parameters $\vec \phi$ of a unitary circuit are used to minimize the expectation value of an observable. 
First, in Section~\ref{sec:VQAopt_q}, we consider such computing schemes, where the quantum part is composed of qubits. 
In order to show that the classical simulation is hard, we assume oracle access to an idealized quantum device. 
Next, in Section~\ref{sec:VQEopt_poly}, we show that the problem is also hard for \ac{VQA} settings with small Hilbert space dimensions (or logarithmically many qubits) so that the oracle can be replaced by efficient classical simulation. 
In Section~\ref{sec:VQAopt_QAOA} we use the same setting, but consider \acp{QAOA} instances instead.
Last, in Section~\ref{sec:VQAopt_ff}, we analyze \acp{VQA} in free Fermionic systems, where the oracle can be replaced by efficient free fermionic calculations.

\subsection{\texorpdfstring{\Ac{VQA}}{VQA} optimization with quantum computer access} 
\label{sec:VQAopt_q}
The common application of \acp{VQA} is within quantum computing, where a quantum computer is used to estimate the expectation value and a classical algorithm chooses the circuit parameters of the quantum computer. 
For the classical optimization, we describe the information obtained from the quantum computer with oracle calls made by the classical algorithm. 

\begin{problem}[\Ac{VQA} minimization, oracular formulation]\label{p:VQA_O}\hfill
	\vspace{-.8\baselineskip} 
	\begin{description}[noitemsep,leftmargin=0.5cm,font=\normalfont]
		
		\item[Instance] 
		A set of generators $\{H_i \}_{i \in \{1,\dots, L\}}$ 
		and an observable $O$ acting on $\H=(\CC^2)^{\otimes N}$, 
		given in terms of their Pauli basis representation. 
		
		\item[Oracle access] 
		We set $\ket{\Psi(\vec \phi)}\coloneqq U_L(\phi_L)\cdots U_1(\phi_1)\ket{\vec 0}$ with $U_i(\phi)=\e^{-\i H_i \phi}$. 
		 The oracle $\mc O$ returns
		$\OEx{\vec \phi} \coloneqq \sandwich{\Psi(\vec \phi)}{O}{\Psi(\vec \phi)}$, for a given $\vec \phi$, up to any desired polynomial additive error. 
		
		\item[Task]
		Find $\vec \phi \in \RR^L$ that minimizes $\OEx{\vec \phi}$ provided access to $\mc O$. 
	\end{description}
\end{problem}
We use the oracle to outsource difficult computations, which is similar to how a quantum computer would in a physical implementation. The motivation of our oracle is that Problem~\ref{p:VQA_O} captures the complexity of only the classical optimization effort in hybrid quantum computations. 
The oracle can be seen as postselecting on the successful runs only, therefore making the return deterministic. 

\begin{proposition}[Hardness of \ac{VQA} optimization, oracular formulation]
	\label{thm:VQA1}
	Assuming $\P\neq \NP$ there is no deterministic classical algorithm that solves Problem~\ref{p:VQA_O} in polynomial time. 
\end{proposition}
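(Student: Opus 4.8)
The plan is to reduce Continuous \MaxCut\ (Problem~\ref{p:shift_maxcut}), which is \NP-hard by Lemma~\ref{lem:continuous_MaxCut}, to the oracular VQA task of Problem~\ref{p:VQA_O}. Given an adjacency matrix \admat, I would construct a VQA instance whose objective $\OEx{\vec\phi}$ coincides \emph{exactly} with the Continuous \MaxCut\ objective $\obfun(\vec\phi)$. Since the construction is computable in time polynomial in $d$, any deterministic polynomial-time solver for Problem~\ref{p:VQA_O} would minimize $\obfun$ in polynomial time, contradicting Lemma~\ref{lem:continuous_MaxCut} unless $\P=\NP$.

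For the encoding I would take $N=L=d$, let each generator act on a single qubit via $H_i=\tfrac12\sigma_x^{(i)}$, and choose the observable
\begin{equation}
  O = \frac14\sum_{i,j=1}^d A_{i,j}\left(\sigma_z^{(i)}\sigma_z^{(j)}-\Id\right).
\end{equation}
Both the generator set and the Pauli representation of $O$ have size $\LandauO(d^2)$, so the reduced instance is produced in polynomial time.

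The key step is to verify $\OEx{\vec\phi}=\obfun(\vec\phi)$. Because the generators act on disjoint qubits, the circuit state factorizes as $\ket{\Psi(\vec\phi)}=\bigotimes_{i=1}^d\e^{-\i\phi_i\sigma_x^{(i)}/2}\ket{0}$, and a one-qubit computation gives $\sandwich{\Psi(\vec\phi)}{\sigma_z^{(i)}}{\Psi(\vec\phi)}=\cos(\phi_i)$. The product structure then yields $\sandwich{\Psi(\vec\phi)}{\sigma_z^{(i)}\sigma_z^{(j)}}{\Psi(\vec\phi)}=\cos(\phi_i)\cos(\phi_j)$ for $i\neq j$, while the diagonal contributions vanish since $A_{i,i}=0$. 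Summing reproduces $\obfun$ exactly, the $-\Id$ term matching the constant $-1$ in the objective.

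To close the argument within the oracle model, I observe that $\OEx{\vec\phi}=\obfun(\vec\phi)$ is an explicit trigonometric sum, so the oracle $\mc O$ can be answered to any prescribed polynomial precision. A polynomial-time solver for Problem~\ref{p:VQA_O} therefore returns a near-minimizer $\vec\phi$ of $\obfun$; rounding each coordinate into $\{0,\pi\}$ as in the proof of Lemma~\ref{lem:continuous_MaxCut} does not increase $\obfun$ and yields a cut, and since cut values are integers a sub-unit precision already pins down the maximum cut exactly. I expect the main work to be the encoding itself — choosing single-qubit $x$-rotations together with the $\sigma_z\sigma_z$ observable so that the factorized state reproduces the cosine products in $\obfun$ — since once $\OEx{\vec\phi}=\obfun(\vec\phi)$ is established, \NP-hardness follows immediately from Lemma~\ref{lem:continuous_MaxCut}.
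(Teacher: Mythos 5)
Your proposal is correct and follows essentially the same route as the paper: a many-one reduction from Continuous \MaxCut\ (Lemma~\ref{lem:continuous_MaxCut}) using the Ising observable $O = \frac14\sum_{i,j} A_{i,j}(\sigma_z^{(i)}\sigma_z^{(j)}-\1)$ and $d$ single-qubit rotation generators so that $\OEx{\vec\phi}=\obfun(\vec\phi)$ exactly. The only difference is that you rotate with $\sigma_x^{(i)}/2$ where the paper uses $\sigma_y^{(i)}/2$, which is immaterial since both give $\langle\sigma_z^{(i)}\rangle=\cos(\phi_i)$ on $\ket{0}$; your added remark that the oracle can be simulated classically on this instance (making the reduction oracle-free) is a sound and slightly more explicit treatment of a point the paper leaves implicit.
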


It is straightforward to show that Problem~\ref{p:VQA_O} is \NP-hard to solve.
Essentially, 
we use a diagonal observable for which the ground state problem is \NP-hard and use unitaries to reach every computational basis state.
\begin{proof}
	We prove the proposition via a reduction of Problem~\ref{p:shift_maxcut} to Problem~\ref{p:VQA_O}. For this, let $N=d$ and let $O$ the usual Ising Hamiltonian encoding of \MaxCut
	\begin{align}
	O&\coloneqq\frac{1}{4}\sum_{i,j=1}^dA_{i,j}(\sigma_z^{(i)}\sigma^{(j)}_z-1)\, .
	\end{align}
	We use $L=d$ layers with
	\begin{align}
		H_i&\coloneqq \frac{\sigma_y^{(i)}}{2} \quad,\quad i\in [d] \,,
	\end{align}
	as generators. 
	By direct calculation we find that 
	\begin{equation}
	\begin{aligned}
	\OEx{\vec \phi}&=\bra{\Psi(\vec\phi)}O\ket{\Psi(\vec \phi)}\\
	\label{eq:oracle_replacement}
	&=\frac{1}{4}\sum_{i,j=1}^dA_{i,j}\left[\cos(\phi_i)\cos(\phi_j)-1\right]\\
	&=\obfun(\vec\phi)
	\, ,
	\end{aligned}
	\end{equation}
	which is the objective function of Problem~\ref{p:shift_maxcut}.  
\end{proof}

To analyze the overall approximation power of an algorithm we define the \emph{approximation error} for an instance as 
\begin{align}
\apvali\coloneqq \frac{\braket{O}_a-\lambda_{\min}(O)}{\lambda_{\max}(O)-\lambda_{\min}(O)}\, ,
\end{align}
where $\lambda_{\min}(O)$ is the smallest eigenvalue of the observable $O$ and $\lambda_{\max}(O)$ is the largest; the expectation value of the final output of the algorithm is $\braket{O}_a\geq \lambda_{\min}(O)$. 
We normalize by the \emph{spectral width}
\begin{align}
\sw(O)\coloneqq\lambda_{\max}(O)-\lambda_{\min}(O)\,,
\end{align} as this ensures that $\apvali\in [0,1]$.
There are two error contributions: 
(i) the \emph{model mismatch} $\apvali_m$ is the approximation error resulting from the ansatz class being unable to represent the ground state and 
(ii) the \emph{optimization error} $\apvali_o$ is the error due to the classical algorithm not converging to the optimal solution within the class. 
That is, 
\begin{align}
\apvali&=
\frac{\braket{O}_{\min}-\lambda_{\min}(O)}{\sw(O)}+\frac{\braket{O}_a-\braket{O}_{\min}}{\sw(O)}\\
&=\qquad\apvali_{\mathrm{m}}\qquad+\qquad\apvali_o 
\, , 
\end{align}
where $\braket{O}_{\min}$ refers to the smallest expectation value over the ansatz class, i.e., the global minimum over the circuit parameters. 
Since we are interested in classical algorithms, we define an optimization error, in a similar manner to how approximation ratios are defined for \NP\ optimization problems (the complexity class \APX), over all considered instances.
\begin{definition}[Optimization error]\label{def:approx_val}
	The \emph{optimization error} of an optimization algorithm $\apval\in [0,1]$ is the smallest number such that
	\begin{align}
	\apval \geq \frac{\braket{O}_a-\braket{O}_{\min}}{\sw(O)}
	\end{align}
	for all considered \ac{VQA} instances. 
\end{definition}

\begin{corollary}\label{cor:apvalexp}
	If $\P\neq\NP$, 
	then there exists no polynomial time algorithm which can guarantee any optimization error $\apval<1$ for all \acp{VQA} defined by Problem~\ref{p:VQA_O}.
\end{corollary}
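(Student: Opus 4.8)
The plan is to argue by contraposition and to amplify the \emph{constant} inapproximability gap of \MaxCut\ into a relative optimization error arbitrarily close to $1$ by using a \emph{product} observable, since any additive or otherwise separable combination of copies would leave the relative error unchanged. Suppose a deterministic polynomial time algorithm $\mcA$ guaranteed optimization error $\apval\leq\apval_0$ for some fixed $\apval_0<1$, and fix a constant $k$ large enough that $\alpha_{\max}^{k}<1-\apval_0$, where $\alpha_{\max}<1$ is the \MaxCut\ inapproximability threshold. From an arbitrary graph $A$ I would build one instance of Problem~\ref{p:VQA_O} on $k$ disjoint registers, each a copy of $A$, with single-qubit generators $H_{(\ell,i)}=\sigma_y^{(\ell,i)}/2$ and the diagonal observable
\begin{equation}
O\coloneqq -\frac{1}{\MCA^{\,k-1}}\prod_{\ell=1}^{k} C_\ell,\qquad C_\ell\coloneqq \tfrac14\sum_{i,j}A_{i,j}\bigl(1-\sigma_z^{(\ell,i)}\sigma_z^{(\ell,j)}\bigr),
\end{equation}
where $C_\ell$ is the cut operator of register $\ell$. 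Because $k$ is constant, the Pauli expansion of $O$ has polynomially many terms, so this is a legitimate instance.

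Two structural facts drive the argument. First, as in Proposition~\ref{thm:VQA1}, the rotations $\e^{-\i\phi\sigma_y/2}$ reach every computational basis state, so the ansatz is complete and the model mismatch vanishes: $\braket{O}_{\min}=\lambda_{\min}(O)=-\MCA$, attained by placing a maximum cut on each register, while $O\leq 0$ gives $\lambda_{\max}(O)=0$ and hence $\sw(O)=\MCA$. Second, on a product state the expectation factorizes, $\braket{O}_{\vec\phi}=-\MCA^{-(k-1)}\prod_\ell\braket{C_\ell}_{\vec\phi_\ell}$, which is a closed trigonometric form in $\cos\phi_i$; in particular the oracle is efficiently classically computable for these instances. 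The dependence on each single coordinate $\phi_i$ is affine in $\cos\phi_i$ with a sign-definite coefficient (the remaining factors are nonnegative cuts), so the coordinate-wise rounding of Lemma~\ref{lem:continuous_MaxCut} still applies and, without increasing $\braket{O}$, converts $\mcA$'s output into bit strings defining cuts $c_\ell=\alpha_\ell\MCA$ on the $k$ registers.

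Combining these with Definition~\ref{def:approx_val}, the rounded output satisfies
\begin{equation}
\apval_0\geq\frac{\braket{O}_a-\braket{O}_{\min}}{\sw(O)}=1-\prod_{\ell=1}^k\alpha_\ell,
\end{equation}
so $\prod_\ell\alpha_\ell\geq 1-\apval_0>\alpha_{\max}^k$ and therefore $\max_\ell\alpha_\ell>\alpha_{\max}$. The closing step is an extraction: the map sending a graph $A$ to the $k$-fold product instance, simulating $\mcA$ (answering its oracle queries by the closed form above), rounding, and returning the best of the $k$ cuts is a polynomial time \MaxCut\ algorithm whose approximation ratio exceeds $\alpha_{\max}$ on \emph{every} input, contradicting the \APX-hardness of \MaxCut\ under $\P\neq\NP$. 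Since $\apval_0<1$ was arbitrary, no polynomial time algorithm can guarantee any optimization error $\apval<1$.

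I expect the main obstacle to be the amplification design itself: it is essential that the product observable makes the error compound multiplicatively while simultaneously keeping the Pauli description polynomial (which forces $k$ to remain constant) and keeping the single-coordinate rounding of Lemma~\ref{lem:continuous_MaxCut} valid despite the now nonlinear objective. The accompanying subtlety is that $\mcA$ operates on the whole product instance rather than on the individual copies, so one must verify that a sufficiently good product value provably forces at least one register to be solved strictly better than $\alpha_{\max}$ — precisely the implication $\max_\ell\alpha_\ell>\alpha_{\max}$ that the extraction step exploits.
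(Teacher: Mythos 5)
Your argument is, at its core, the same boosting argument the paper gives: the paper also takes $k$ disjoint copies of the Proposition~\ref{thm:VQA1} construction for constant $k$, forms the tensor-power observable $\tilde O=(-1)^{k-1}O^{\otimes k}$ (which coincides exactly with your $-\prod_{\ell}C_\ell$ once your normalization is dropped), uses factorization of the expectation value on product states, and concludes that a guarantee $\apval_0<1$ would yield a \MaxCut\ approximation ratio exceeding $\alpha_{\max}$, contradicting \APX-hardness under $\P\neq\NP$. The structural differences are minor: the paper ties the $k$ registers together via $\tilde U(\vec\phi)=U(\vec\phi)^{\otimes k}$, so a single ratio $\alpha$ appears and the error bound is $1-\alpha^{k}$ directly, whereas you give each register independent parameters and therefore need the (correct) extra observation that $\prod_\ell\alpha_\ell>\alpha_{\max}^{k}$ forces $\max_\ell\alpha_\ell>\alpha_{\max}$; you also spell out the classical simulation of the oracle and the coordinate-wise rounding to discrete cuts, which the paper leaves implicit.

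There is one concrete defect to repair: your observable carries the prefactor $1/\MCA^{k-1}$, but $\MCA$ is precisely the \NP-hard quantity at stake. Consequently the map from $A$ to your instance is not polynomial-time computable, and neither is your closed-form simulation of the oracle answers, so the final extraction step --- which must produce a genuine polynomial-time \MaxCut\ algorithm in order to reach the contradiction --- fails as literally written. The fix costs nothing: the optimization error of Definition~\ref{def:approx_val} is invariant under positive rescaling of the observable, since $\braket{O}_a-\braket{O}_{\min}$ and $\sw(O)$ scale identically. So drop the prefactor and use $O=-\prod_{\ell=1}^{k}C_\ell$, i.e., the paper's $(-1)^{k-1}O^{\otimes k}$; then $\sw(O)=\bigl|\braket{O}_{\min}\bigr|=\MCA^{k}$, your bound $\apval_0\geq 1-\prod_\ell\alpha_\ell$ is unchanged, and the entire reduction becomes polynomial time.
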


\begin{proof}
	We prove this statement by relating the optimization error of a \ac{VQA} to the approximation ratio of \MaxCut\ and by introducing a boosting technique to amplify errors in the setting of Problem~\ref{p:VQA_O}.
	
	From the proof of Proposition~\ref{thm:VQA1} we obtain $\sw(O)=\MCA$ and the optimal solution is also $\left|\braket{O}_{\min} \right|=\MCA$, as there is no model mismatch. 
	From the algorithm we get $\braket{O}_{a}=\MCAa$, where $\MCAa$ is the approximation of the continuous \MaxCut\ problem (Problem~\ref{p:shift_maxcut}) and, therefore, an approximation to \MaxCut\ itself. 
	With this argument it follows that 
	\begin{equation}
		\apval\geq1-\alpha\, ,
	\end{equation}
	where $\alpha$ is the approximation ratio related to Problem~\ref{p:shift_maxcut} of the algorithm. 
	To boost this result we introduce a variable $k$ and choose operators for $k\times d$ qubits
	\begin{align}
		\tilde O&=(-1)^{k-1}O^{\otimes k} \, ,\\
		\tilde U(\vec \phi)&=U(\vec \phi)^{\otimes k} \, .
	\end{align}
	We can verify that the generators and $\tilde O$ only have $\poly(d)$ many terms for constant $k$. For the expectation value this gives
\begin{equation}
	\begin{aligned}
		\braket{\tilde O(\vec \phi)}&=(-1)^{k-1}\bra{\Psi(\vec \phi)}^{\otimes k}O^{\otimes k}\ket{\Psi(\vec \phi)}^{\otimes k}\\
		&=-\left|\OEx{\vec\phi}\right|^k=-|\obfun(\vec\phi)|^k \, ,
	\end{aligned}
\end{equation}
	where the introduced sign ensures that the problem remains a minimization for all $k$.
	We obtain $\sw(\tilde O)=\bigl|\braket{\tilde O}_{\min}\bigr|=\MCA^k$. 
	This yields
\begin{equation}
	\begin{aligned}
	\apval&\geq \sup_A \frac{\left|\MCA^k-\left|\braket{\tilde O}_a\right|\right|}{\MCA^k}\\
	&=\sup_A \left(1-\frac{|\MCAa|^k}{\MCA^k}\right)\\
	&= (1-\alpha^k)\,, 
	\end{aligned}
\end{equation}
	Therefore, no optimization error strictly smaller than $\apval<1$ can exist for all instances (if $\P\neq \NP$), as this would mean in return, the algorithm could solve Problem~\ref{p:shift_maxcut} to arbitrary precision.
\end{proof}

\subsection{Logarithmic number of qubits --- polynomial Hilbert space dimension}
\label{sec:VQEopt_poly}
We can improve on the previous result by allowing only $N\in O(\log(d))$ many qubits, where $d$ is the input length of the \MaxCut\ instance. 
This drastically reduces the system's size and complexity. Notably, since the Hilbert space is now only of polynomial dimension, both the calculation of expectation values and the ground state problem can be computed efficiently. 
Yet we show that \ac{VQA} optimization is still \NP-hard.
This means that the classical optimization does not merely inherit the hardness of the ground state problem but rather is intrinsically difficult.
Since the expectation value is efficiently numerically simulatable, we do not require oracle access to a quantum computer to analyze the problem. 
Also, for convenience, instead of the Pauli-basis we use the computational basis of the Hilbert space $\H$ of dimension $\dim(\H)=2^N\eqqcolon n$.
This gives the following problem description. 

\begin{problem}[\Ac{VQA} minimization problem]\label{p:VQA}\hfill
\begin{description}[noitemsep,leftmargin=0.5cm,font=\normalfont]
\item [Instance] 
An initial state $\ket{\Psi_0}\in \CC^n$, a set of generators $\{H_i \}_{i \in \{1,\dots, L\}}\subset \Herm(\CC^n)$, where $L$ is the number of layers and an observable $O\subset \Herm(\CC^n)$. 

\item[Task]
For $\ket{\Psi(\vec \phi)}\coloneqq U_L(\phi_L)\cdots U_1(\phi_1)\ket{\Psi_0}$ with $U_i(\phi)=\e^{-\i H_i \phi}$, 
find a $\vec \phi \in \RR^L$ that minimizes $\OEx{\vec\phi}\coloneqq~\sandwich{\Psi(\vec \phi)}{O}{\Psi(\vec \phi)}$. 
		\end{description}
\end{problem}

\begin{theorem}[Hardness of \ac{VQA} optimization]\label{thm:VQA2}
	\Ac{VQA} optimization (Problem~\ref{p:VQA}) is \NP-hard. 
\end{theorem}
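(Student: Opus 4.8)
The plan is to reduce the continuous \MaxCut\ problem (Problem~\ref{p:shift_maxcut}), which is \NP-hard by Lemma~\ref{lem:continuous_MaxCut}, to Problem~\ref{p:VQA}, this time on a Hilbert space of dimension only $n=\poly(d)$. The genuine obstacle, compared with the oracular construction of Proposition~\ref{thm:VQA1}, is that there one used $d$ qubits and the tensor-product factorization $\sandwich{\Psi(\vec\phi)}{\sigma_z^{(i)}\sigma_z^{(j)}}{\Psi(\vec\phi)}=\cos(\phi_i)\cos(\phi_j)$ to generate the quadratic coupling. With only $O(\log d)$ qubits no product structure over $d$ factors is available, and a single amplitude-encoded state cannot produce the terms $\cos(\phi_i)\cos(\phi_j)$: an expectation value is quadratic in the amplitudes, whereas $\cos(\phi_i)\cos(\phi_j)$ for $i\neq j$ is quartic. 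So the main conceptual step is to recover a quartic objective from linear observables on a small space.

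My resolution is a \emph{doubling} trick. I would take the single-copy space to be $\CC^d\otimes\CC^2$ with basis $\{\ket{i}\ket{b}:i\in[d],\,b\in\{0,1\}\}$, embedded into $(\CC^2)^{\otimes(\lceil\log_2 d\rceil+1)}$, and let $\H$ consist of two such copies, so that $N=2(\lceil\log_2 d\rceil+1)\in O(\log d)$. Choose $\ket{\Psi_0}=\ket{\psi_0}\otimes\ket{\psi_0}$ with $\ket{\psi_0}=\tkw{\sqrt d}\sum_{i=1}^d\ket{i}\ket{0}$, and for each $i\in[d]$ the generator $H_i=G_i\otimes\1+\1\otimes G_i$ built from the single-copy generator $G_i=\ketbra{i}{i}\otimes\tfrac{\sigma_y}{2}$, using $L=d$ layers. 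Since the $H_i$ pairwise commute and act as synchronized $y$-rotations on both copies, $\ket{\Psi(\vec\phi)}=\ket{\psi(\vec\phi)}\otimes\ket{\psi(\vec\phi)}$ with $\ket{\psi(\vec\phi)}=\tkw{\sqrt d}\sum_i\ket{i}(\cos(\phi_i/2)\ket{0}+\sin(\phi_i/2)\ket{1})$.

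Writing $Z_i=\ketbra{i}{i}\otimes\sigma_z$ on one copy, a direct computation gives $\sandwich{\psi(\vec\phi)}{Z_i}{\psi(\vec\phi)}=\tkw d\cos(\phi_i)$, hence on the product state $\sandwich{\Psi(\vec\phi)}{Z_i\otimes Z_j}{\Psi(\vec\phi)}=\tkw{d^2}\cos(\phi_i)\cos(\phi_j)$: the quartic term appears as a product of two copy-local quadratic expectation values. Defining the diagonal, $\poly(d)$-term observable
\begin{equation}
O\coloneqq\frac14\sum_{i,j=1}^d A_{i,j}\bigl(d^2\,Z_i\otimes Z_j-\1\bigr)\,,
\end{equation}
I then obtain $\OEx{\vec\phi}=\obfun(\vec\phi)$ exactly, so minimizing $\OEx{\vec\phi}$ over $\vec\phi\in\RR^L$ coincides with minimizing $\obfun$ (which is $2\pi$-periodic in each coordinate). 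All data are constructible from $A$ in polynomial time, completing the reduction.

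The part I expect to require the most care is not any single calculation but the legitimacy of the doubling: verifying that synchronizing the parameters across the two copies via $H_i=G_i\otimes\1+\1\otimes G_i$ respects the one-parameter-per-layer structure of Problem~\ref{p:VQA} while keeping $\dim\H\in\poly(d)$ and faithfully realizing the quartic objective. I would also check that the basis states introduced by padding $d$ up to a power of two carry zero amplitude throughout (the $G_i$ and the $\ketbra{i}{i}$ in $O$ vanish there) so that they cannot affect $\OEx{\vec\phi}$, which confirms that the reduction is exact.
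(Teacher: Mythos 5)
Your construction is correct, but it takes a genuinely different route from the paper's. The paper also reduces from Problem~\ref{p:shift_maxcut}, but it stays on a \emph{single} copy: it uses $\H=\CC^{2d}$, initial state $\frac{1}{\sqrt{2d}}\sum_{j}\ket{j}$, generators $H_i=\ketbra{2i-1}{2i-1}-\ketbra{2i}{2i}$ so that the amplitudes acquire phases $\e^{\mp\i\phi_i}$, and an observable whose \emph{off-diagonal} entries encode $A$; the product of cosines then appears through the identity $\cos(\phi_i-\phi_j)+\cos(\phi_i+\phi_j)=2\cos(\phi_i)\cos(\phi_j)$ applied to the expectation value, cf.\ Eq.~\eqref{eq:VQA}. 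This shows, incidentally, that the premise motivating your doubling trick is false: a single amplitude-encoded copy \emph{can} produce $\cos(\phi_i)\cos(\phi_j)$, because this quartic expression is a sum of the two functions $\cos(\phi_i\pm\phi_j)$, each of which is quadratic in the amplitudes $\e^{\pm\i\phi_k}$. Your two-copy construction is nevertheless a valid alternative reduction: the synchronized generators $H_i=G_i\otimes\1+\1\otimes G_i$ are legal instances of Problem~\ref{p:VQA} (one parameter per layer, no structural restriction on the $H_i$), the state remains a product $\ket{\psi(\vec\phi)}\otimes\ket{\psi(\vec\phi)}$, the expectation of $Z_i\otimes Z_j$ factors into $\frac{1}{d^2}\cos(\phi_i)\cos(\phi_j)$, the padded basis states stay unpopulated, and all data are of size $\poly(d)$ and computable from $A$ in polynomial time, so $\OEx{\vec\phi}=\obfun(\vec\phi)$ exactly and \NP-hardness follows from Lemma~\ref{lem:continuous_MaxCut}. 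What your route buys is a diagonal observable and an argument needing no trigonometric identity, only factorization over a product state; what it costs is a squared Hilbert space dimension ($\approx 4d^2$ versus $2d$) and the extra bookkeeping about synchronization and padding. Both constructions keep the ground state problem trivial (polynomial dimension, and in your case even a diagonal observable), so both support the paper's point that the hardness lies in the classical optimization itself.
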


\begin{proof} 
We prove the theorem via a many-one reduction from Problem~\ref{p:shift_maxcut}
Let \admat\ be the adjacency matrix of an unweighted graph. 
On the Hilbert space $\H=\CC^{2d}$
we first define an observable in the standard basis as
\begin{align}\label{eq:defOprime}
	O'&\coloneqq \frac{d}{8}\cdot A\otimes \begin{pmatrix} 1&1\\1&1	\end{pmatrix}\,,
\end{align}
where $\otimes$ denotes the Kronecker product. For the actual observable we modify the diagonal as 
\begin{align}\label{eq:Observable_loc_VQA}
	O_{i,j}=\left\{\begin{matrix}
	O'_{i,j}& i\neq j\\
	-\sum_{\alpha=1}^{2d} O'_{\alpha,j} & i=j\\
	\end{matrix}\right.\,.
\end{align}  
The initial state and generators are chosen as
	\begin{align}
	\ket{\Psi_0}&\coloneqq \frac{1}{\sqrt{2d}} \sum_{j=1}^{2d} \ket{j} \, ,\\\label{eq:VQA_U_small}
	H_i&\coloneqq \ketbra{2i-1}{2i-1}- \ketbra{2i}{2i}\,,
	\end{align} 
	where we take $L=d$ layers.
	As the parametrized state we obtain
\begin{equation}
	\begin{aligned}
	\ket{\Psi(\vec\phi)}
	&\coloneqq 
	U_{d}(\phi_d)\dots U_1(\phi_1) \ket{\Psi_0} 
	\\
	&=
	\frac{1}{\sqrt{2d}}\sum_{j=1}^{2d}\left( \e^{-\i \phi_j}\ket{2j-1}+\e^{\i \phi_j}\ket{2j} \right)
\end{aligned}
\end{equation}
and 
\begin{equation}\label{eq:VQA}
	\begin{aligned}
	\OEx{\vec \phi}&=\bra{\Psi(\vec \phi)}O\ket{\Psi(\vec \phi)}
	\\
	&=\frac{1}{16} \sum_{s,p\in\{+,-\}}\sum_{i,j=1}^d\e^{\i s \phi_i }A_{i,j} \e^{-\i p \phi_j }-\frac{1}{4}\sum_{i,j=1}^d A_{i,j} \\
	&=\frac{1}{8}\sum_{i,j=1}^d A_{i,j} \left(\cos(\phi_i-\phi_j)+\cos(\phi_i+\phi_j)-2\right)\\
	&=\frac{1}{4}\sum_{i,j=1}^d A_{i,j}\left[\cos(\phi_i)\cos(\phi_j)-1\right]\\
	&=\obfun(\vec\phi)
	\end{aligned}
\end{equation}
as corresponding expectation value. 
	This completes the reduction of Problem~\ref{p:shift_maxcut} to Problem~\ref{p:VQA}.
\end{proof}
From this result, \NP-completeness follows for the decision version.
\begin{problem}[\Ac{VQA} minimization, decision version]\label{p:VQA_d}\hfill
	\begin{description}[noitemsep,leftmargin=0.5cm,font=\normalfont]
	\item[Instance] An initial state $\ket{\Psi_0}\in \CC^n$, a set of generators $\{H_i \}_{i \in \{1,\dots, L\}}\subset \Herm(\CC^n)$, where $L$ are the number of layers, an observable $O\in\Herm(\CC^n)$ and a threshold $a\in \RR$. 
	
	\item[Task] 
	For $\ket{\Psi_{\vec \phi}}\coloneqq U_L(\phi_L)\cdots U_1(\phi_1)\ket{\Psi_0}$ with $U_i(\phi)~=~\e^{-\i H_i \phi}$, 
	determine wheter there exists 
	$\vec \phi \in \RR^d$ for which $\sandwich{\Psi(\vec \phi)}{O}{\Psi(\vec \phi)}\leq a$. 
\end{description}
\end{problem}

\begin{corollary}
	Problem~\ref{p:VQA_d} is \NP-complete.
\end{corollary}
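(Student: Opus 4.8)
The plan is to prove \NP-completeness of Problem~\ref{p:VQA_d} in the two standard parts: membership in \NP\ and \NP-hardness. The hardness is essentially already done, so the only genuine content is the membership argument, and within it a single precision subtlety.

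For \NP-hardness I would reuse the construction of Theorem~\ref{thm:VQA2} verbatim. That reduction sends an adjacency matrix $A$ to a Problem~\ref{p:VQA} instance whose objective satisfies $\OEx{\vec\phi}=\obfun(\vec\phi)$, and by (the proof of) Lemma~\ref{lem:continuous_MaxCut} the global minimum of $\obfun$ equals $-\MCA$. Choosing the threshold $a=-t$ for an integer $t$, the decision question ``does there exist $\vec\phi$ with $\OEx{\vec\phi}\leq a$'' becomes exactly ``is $\MaxCut(A)\geq t$'', i.e.\ the standard \NP-complete decision version of \MaxCut. This is a polynomial many-one reduction from \MaxCut\ to Problem~\ref{p:VQA_d}, giving hardness.

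For membership in \NP\ the certificate is the parameter vector $\vec\phi$. The point I would stress is that in Problem~\ref{p:VQA_d} the objects $\ket{\Psi_0}$, $O$, and the $H_i$ are supplied explicitly in the computational basis of $\CC^n$, so $n$ is bounded by a polynomial in the input length. Hence a verifier can, in polynomial time, form each $U_i(\phi_i)=\e^{-\i H_i\phi_i}$ as a matrix exponential of a $\poly$-sized Hermitian matrix, apply them successively to $\ket{\Psi_0}$ to obtain $\ket{\Psi(\vec\phi)}$, evaluate $\sandwich{\Psi(\vec\phi)}{O}{\Psi(\vec\phi)}$, and compare it with $a$. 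These are all polynomial-time linear-algebra operations, so the guessed $\vec\phi$ can be checked efficiently.

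The main obstacle will be the one intrinsic to continuous decision problems: the witness $\vec\phi$ is real-valued and need not have a finite description, so I must argue that a polynomially long bit-string already suffices. I would settle this with a Lipschitz estimate read off from the gradient formula, namely that $\vec\phi\mapsto\OEx{\vec\phi}$ has (an $\ell_1$-)Lipschitz constant at most $2\snorm{O}\sum_i\snorm{H_i}$, which is polynomial in the input because the matrix entries, and hence these operator norms, are part of the input. Rounding each coordinate of an exact optimizer to $\poly$ many bits then perturbs the objective by only an exponentially small amount, so under the convention adopted throughout the paper (that the problem is posed to exponential precision) the rounded $\vec\phi$ is a valid polynomial-size certificate. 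The only case that would otherwise need care is when the true optimum coincides exactly with $a$; for the hardness reduction this never occurs, since there the optimal value $-\MCA$ is an integer and yes/no instances are separated by an $\LandauOmega(1)$ gap, so robustness of the reduction is automatic.
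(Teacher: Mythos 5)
Your proposal is correct and follows essentially the same route as the paper: membership in \NP\ with $\vec\phi$ as the certificate, verified by explicit polynomial-size linear algebra since the Hilbert space dimension is polynomial in the input, and hardness inherited from the construction behind Theorem~\ref{thm:VQA2}. You are somewhat more careful than the paper's own (very terse) proof in two respects—spelling out the threshold choice $a=-t$ to reduce from decision \MaxCut, and giving a Lipschitz/rounding argument for why a polynomially long witness suffices—both of which the paper implicitly delegates to its stated exponential-precision convention.
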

\begin{proof}
	As calculating the expectation value of observable on polynomial dimensional Hilbert spaces is in \P, $\vec \phi$ is a valid proof for the \emph{yes} instances, which can be verified in polynomial time and is therefore in \NP. Together with hardness of problem~\ref{p:VQA}, this means Problem~\ref{p:VQA_d} is \NP-complete.
\end{proof}
We now show that $L=1$ layer is sufficient to show hardness. 
For this purpose we will use certain properties of Hamiltonian spectra. 

\begin{definition}[Approximate ergodic energy spectrum]
Let $\epsilon>0$. 
We call a set $\{E_i\}_{i\in n}\subset \RR$ an \emph{$\epsilon$-approximate ergodic energy spectrum} if for all 
$\vec \phi \in [0,2\pi)^n$ there exists $t\in \RR_0^+$
such that 
\begin{equation}
	\modnorm{\phi_i - E_i t}
	\leq
	\epsilon
\end{equation}
for all $i\in [n]$, 
where 
$\modnorm{x} \coloneqq
\inf_{k\in \ZZ} |x-2\pi k| \in [0,\pi]
$. 
\end{definition}

Generic energy spectra are exactly ($\epsilon=0$) ergodic. 
For our purpose we want to show that there are also efficiently expressible approximate ergodic energy spectra.  

\begin{lemma}[Approximate ergodic energy spectra]\label{l:ph_to_en}\hfill
\\
Let $m\in \NN$. 
Then
\begin{equation}
E_i\coloneqq \frac{2\pi}{m^i}	
\end{equation}
with $i \in [n]$ defines an $\epsilon$-approximate ergodic energy spectrum with 
\begin{equation}
	\epsilon = \frac {4\pi} m\, .
\end{equation}

\end{lemma}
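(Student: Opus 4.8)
The plan is to construct, for each target phase vector $\vec\phi$, an explicit nonnegative \emph{integer} time $t$ and to exploit the base-$m$ structure of the spectrum $E_i = 2\pi/m^i$. The first step is to reduce the circular constraints to an approximation problem for fractions. Writing $t = q\, m^i + r$ with $r = t \bmod m^i \in \{0,\dots,m^i-1\}$, the multiple of $2\pi$ drops out and one gets $E_i t \equiv 2\pi\, (t \bmod m^i)/m^i \pmod{2\pi}$, a number already lying in $[0,2\pi)$. Hence, setting $\theta_i \coloneqq \phi_i/(2\pi) \in [0,1)$, it suffices to find an integer $t \geq 0$ with $|(t \bmod m^i)/m^i - \theta_i|$ small for every $i \in [n]$, and then to convert the resulting real distance into a bound on $\modnorm{\phi_i - E_i t}$ using that the circular distance never exceeds the linear one.

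Second, I would build $t$ digit by digit in base $m$. Write $t = \sum_{k=0}^{n-1} a_k m^k$ with digits $a_k \in \{0,\dots,m-1\}$; then $(t \bmod m^i)/m^i = \sum_{j=1}^{i} a_{i-j} m^{-j}$ is a base-$m$ fraction whose \emph{leading} digit is $a_{i-1}$. The key structural observation is that the constraint at index $i$ is governed by its own digit $a_{i-1}$, whereas the digits $a_0,\dots,a_{i-2}$ — which serve the lower-index constraints — only feed into index $i$ through the tail $\sum_{j=2}^{i} a_{i-j} m^{-j}$. Because the $n$ constraints claim pairwise distinct leading digits, they do not compete, and I can simply set $a_{i-1} \coloneqq \lfloor m\,\theta_i \rfloor$ for each $i \in [n]$; this is a legal digit since $\theta_i \in [0,1)$ forces $m\theta_i \in [0,m)$.

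Finally, I would bound the error. By the choice of leading digit, $0 \leq \theta_i - a_{i-1}/m < 1/m$, so the leading term alone is within a single digit-width $2\pi/m$ of $\phi_i$; the tail is a truncated geometric series bounded by $\sum_{j\geq 2} (m-1)m^{-j} = 1/m$, contributing at most another $2\pi/m$. The triangle inequality then yields $\modnorm{\phi_i - E_i t} \leq 4\pi/m$ for every $i$, which is exactly the claimed $\epsilon$. I expect the only genuine subtlety to be articulating this non-interference cleanly — namely that each index is assigned a private leading digit, so the $n$ errors do not accumulate — after which the estimate is a one-line geometric-series bound (indeed the two contributions carry opposite signs, so a more careful analysis even gives $2\pi/m$).
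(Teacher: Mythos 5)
Your proposal is correct and is essentially the paper's own proof: your digit assignment $a_{i-1}=\lfloor m\theta_i\rfloor$ reproduces exactly the paper's choice $s_i=\lfloor \phi_i m/(2\pi)\rfloor$ and $t=\sum_j s_j m^{j-1}$, and the error analysis splits identically into a leading-digit term ($\leq 2\pi/m$), a lower-digit geometric tail ($\leq 2\pi/m$), and higher digits that are multiples of $2\pi$. Your closing remark that the two contributions have opposite signs (so the bound can be sharpened to $2\pi/m$) is a valid minor refinement the paper does not make.
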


We provide a proof in Appendix~\ref{ap:erd_en}. The chosen energies can be expressed with $n\times \lceil\log_2(m)\rceil$ bits of precision. With this property we can show the following theorem.

\begin{theorem}\label{t:unitary}
	\Ac{VQA} optimization (Problem~\ref{p:VQA}) is \NP-hard for $L=1$ layer.
\end{theorem}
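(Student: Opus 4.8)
The plan is to reuse the construction from the proof of Theorem~\ref{thm:VQA2} and merge its $d$ mutually commuting generators into a single one, relying on Lemma~\ref{l:ph_to_en} to recover the effect of $d$ independent phases from one time parameter. I would keep the Hilbert space $\CC^{2d}$, the initial state $\ket{\Psi_0}$, and the observable $O$ exactly as there, but replace the generators by the single generator
\begin{equation}
	H \coloneqq \sum_{i=1}^d E_i H_i, \qquad E_i \coloneqq \frac{2\pi}{m^i},
\end{equation}
with $H_i = \ketbra{2i-1}{2i-1} - \ketbra{2i}{2i}$ and an integer $m$ to be fixed later. Because the $H_i$ act on disjoint pairs of basis vectors and hence commute, the single-layer unitary $U(t) = \e^{-\i H t}$ imprints the phase $E_i t$ on the $i$-th pair, so the expectation value factorizes exactly as before: writing $t$ for the single parameter (the role of $\phi$ when $L=1$) and $\vec\phi(t) \coloneqq (E_1 t, \dots, E_d t)$, one gets $\OEx{t} = \obfun(\vec\phi(t))$.

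I would then invoke Lemma~\ref{l:ph_to_en} with $n = d$ to conclude that the orbit $\{\vec\phi(t) : t \geq 0\}$ is $\epsilon$-dense in $[0,2\pi)^d$ for $\epsilon = 4\pi/m$: for the global minimizer $\vec\phi^\ast$ of $\obfun$ there is a time $t^\ast$ with $\modnorm{\phi_i(t^\ast) - \phi_i^\ast} \leq \epsilon$ for every $i$. To transfer this coordinate-wise proximity to the objective, I would use the gradient computed after Lemma~\ref{lem:continuous_MaxCut} to bound $|\partial_{\phi_i}\obfun| \leq \tfrac12 \sum_{j\neq i} A_{i,j}$; summing over the $d$ coordinates and using that $\obfun$ is $2\pi$-periodic in each argument (so the mod-$2\pi$ metric is the relevant one) yields
\begin{equation}
	|\obfun(\vec\phi) - \obfun(\vec\phi')| \leq \epsilon \sum_{i=1}^d \tfrac12 \sum_{j\neq i} A_{i,j} = \epsilon\, \nE
\end{equation}
whenever $\modnorm{\phi_i - \phi_i'} \leq \epsilon$ for all $i$. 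Since every $t$ realizes some $\vec\phi(t)$, the reverse inequality $\min_t \OEx{t} \geq \min_{\vec\phi} \obfun(\vec\phi)$ is immediate, and together these give the sandwich
\begin{equation}
	\min_{\vec\phi} \obfun(\vec\phi) \leq \min_{t} \OEx{t} \leq \min_{\vec\phi} \obfun(\vec\phi) + \epsilon\, \nE.
\end{equation}

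Finally I would fix $m$ so that $\epsilon\, \nE = 4\pi\nE/m < 1$, i.e.\ $m > 4\pi\nE$, which is polynomial since $\nE \leq d^2$; by Lemma~\ref{l:ph_to_en} the energies then require only $d\lceil\log_2 m\rceil = \poly(d)$ bits and the instance is built in polynomial time. Recalling from Lemma~\ref{lem:continuous_MaxCut} that $\min_{\vec\phi}\obfun = -\MCA$ is a negative integer, the sandwich places $-\min_t \OEx{t}$ in the interval $(\MCA - 1, \MCA]$, so rounding up recovers $\MCA$ exactly, which completes the reduction from \MaxCut\ (Problem~\ref{p:maxcut}) and establishes \NP-hardness. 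The crux of the argument is this middle step: the ergodic lemma only delivers \emph{approximate} phase matching, so everything hinges on controlling how the coordinate error $\epsilon$ propagates into $\obfun$ through the Lipschitz bound, and on checking that a merely polynomially large $m$ already shrinks this error below the integer gap that separates distinct \MaxCut\ values.
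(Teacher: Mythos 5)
Your construction is exactly the paper's: you keep the Hilbert space, initial state, and observable of Theorem~\ref{thm:VQA2} and merge its commuting generators into the single generator $H=\sum_{i}E_iH_i$ with the spectrum $E_i=2\pi/m^i$ of Lemma~\ref{l:ph_to_en}, so that $\OEx{t}=\obfun(E_1t,\dots,E_dt)$, which is precisely the paper's reduction. The only difference is that you spell out the error-propagation step that the paper compresses into ``approximates $\obfun(\vec\phi)$ to arbitrary precision'' --- namely the Lipschitz bound $|\obfun(\vec\phi)-\obfun(\vec\phi')|\leq \epsilon\,\nE$, the polynomial choice $m>4\pi\nE$, and the rounding argument exploiting the integer gap between \MaxCut\ values --- so your proof is correct and, if anything, more complete than the one in the paper.
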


	\begin{proof}
	For the single layer, we choose the generators as a linear combination  of terms from Eq.~\eqref{eq:VQA_U_small}
	\begin{align}\label{eq:Ham_ev}
	H=\sum_{j=1}^d E_j \left( \ket{2j-1}\bra{2j-1}- \ket{2j}\bra{2j}\right)\,
	\end{align}
	meaning $U(\phi)=\exp(-\i \phi H)=U(\vec E \phi)$. The initial state and $O$ remain identical. This leads to the expectation value
	\begin{align}
		\OEx{\phi}=\sum_{i,j=1}^d A_{i,j}\left[\cos(E_i \phi)\cos(E_j \phi)-1\right]=\obfun(\vec E\phi)\,.
	\end{align}
If $\{E_i\}_{i}$ are chosen as in Lemma~\ref{l:ph_to_en} then $\OEx{\phi}$ approximates $\obfun(\vec\phi)$ with $\vec \phi = \vec E \phi$ to arbitrary precision, 
which we have shown to be \NP-hard to optimize in Lemma~\ref{lem:continuous_MaxCut}. 
\end{proof}

By viewing the \ac{VQA} in Theorem~\ref{t:unitary} as a continuous time evolution for logarithmically many qubits, we obtain the following result (we are unaware of this statement being explicitly proven before). 

\begin{corollary}
For a system with logarithmically many qubits, we consider the expectation value of a (unitarily) time evolved observable $\langle O(t)\rangle$, starting from some initial state. 
Minimizing the expectation value over $t\in \RR_0^+$ is then \NP-hard.
\end{corollary}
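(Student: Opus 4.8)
The plan is to recognize that the single-layer instance built in the proof of Theorem~\ref{t:unitary} already \emph{is} a continuous time-evolution problem, so the corollary follows by reinterpreting that instance rather than by constructing a new reduction. First I would recall that a one-parameter circuit $\ket{\Psi(\phi)}=U(\phi)\ket{\Psi_0}$ with $U(\phi)=\e^{-\i\phi H}$ is nothing but the Schr\"odinger evolution of $\ket{\Psi_0}$ under $H$ for time $t=\phi$. Consequently the VQA objective coincides with the Heisenberg-picture expectation value of the time-evolved observable,
\[
	\OEx{\phi}=\sandwich{\Psi_0}{\e^{\i\phi H}O\e^{-\i\phi H}}{\Psi_0}=\langle O(t)\rangle\big|_{t=\phi}\,,
\]
so minimizing the VQA parameter and minimizing the evolution time are literally the same optimization on the same instance $(\ket{\Psi_0},H,O)$.

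Next I would check the two framing conditions of the statement. The instance of Theorem~\ref{t:unitary} lives on $\H=\CC^{2d}$, i.e.\ on $N=\lceil\log_2(2d)\rceil\in\LandauO(\log d)$ qubits, so the hypothesis of logarithmically many qubits holds automatically. The only genuine discrepancy between the two formulations is the domain: Problem~\ref{p:VQA} optimizes over all real $\phi$, whereas here the time is restricted to $t\in\RR_0^+$.

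To close this gap I would invoke the explicit form of the objective from the proof of Theorem~\ref{t:unitary}, namely $\OEx{\phi}=\obfun(\vec E\phi)$, where $\obfun$ is assembled solely from the factors $\cos(E_i\phi)$. Since each such factor is even in $\phi$, the full objective satisfies $\OEx{-\phi}=\OEx{\phi}$, and hence its infimum over $\RR$ is already attained on $\RR_0^+$. Therefore the nonnegative-time minimization inherits the \NP-hardness established in Theorem~\ref{t:unitary} (through Lemma~\ref{l:ph_to_en} and Lemma~\ref{lem:continuous_MaxCut}), which completes the argument. I expect the domain restriction to be the only point requiring any care: one must confirm that forbidding negative times does not discard the minimizer, and the evenness of the objective guarantees exactly this. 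Everything else is a change of vocabulary from \emph{circuit parameter} to \emph{evolution time}.
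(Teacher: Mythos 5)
Your proposal is correct and takes essentially the same route as the paper, which obtains the corollary simply by reading the single-layer instance of Theorem~\ref{t:unitary} as Schr\"odinger time evolution on $\H=\CC^{2d}$, i.e., on $\LandauO(\log d)$ qubits, with the circuit parameter $\phi$ renamed as the time $t$. Your evenness argument for the restriction to $t\in\RR_0^+$ is valid but not strictly necessary: the definition of an $\epsilon$-approximate ergodic energy spectrum (and the explicit construction in Lemma~\ref{l:ph_to_en}, where $t(\vec s)\geq 0$) already quantifies over nonnegative times, so the reduction from Problem~\ref{p:shift_maxcut} lands in $\RR_0^+$ directly.
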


\subsection{\texorpdfstring{\Aclp{QAOA}}{QAOA} for a logarithmic number of qubits}
\label{sec:VQAopt_QAOA}
\Acp{QAOA} can be seen as certain types of \acp{VQA}, which are inspired by adiabatic computation, where a slow enough transition between two Hamiltonians $H_b$ and $H_c$ guarantees remaining in the ground state as long as the Hamiltonians are gapped and level crossings are avoided~\cite{Albash2018AdiabaticQuantumComputation}. 
\Acp{QAOA} capture a time-discrete version of this approach by alternatingly applying the time evolutions of the Hamiltonians.
Accordingly, parameter vectors $\vec \beta,\vec \gamma \in \RR^{L}$ need to be chosen, which define how long each Hamiltonian is applied. 

We demonstrate that the hardness of \ac{VQA} optimization for logarithmically many qubits also translates to \ac{QAOA} problems.  

Formally, the problem is as follows. 
\begin{problem}[\Ac{QAOA} minimization problem]\label{p:QAOA}\hfill
	\begin{description}[noitemsep,leftmargin=0.5cm,font=\normalfont]
		\item[Instance] Two Hamiltonians $H_b,H_c\in \Herm(\CC^n)$ and the number of layers $L$ in unary notation\footnote{This means that the length of the input scales linearly with $L$.}.
		
		\item[Task] For a tunable state $\ket{\Psi(\vec \beta, \vec \gamma)}\coloneqq U_b(\beta_L)U_c(\gamma_L)\cdots U_b(\beta_1)U_c(\gamma_1) \ket{\Psi_0}$, where $\ket{\Psi_0}$ is the ground state of $H_b$, $U_b(\beta)=\e^{-\i H_b \beta}$ and $U_c(\gamma)=\e^{-\i H_c \gamma}$,
		find $\vec \beta,\vec \gamma \in \RR^d$ which minimize $\OEx{\vec \beta, \vec \gamma}\coloneqq\sandwich{\Psi(\vec \beta, \vec \gamma)}{H_c}{\Psi(\vec \beta, \vec \gamma)}$. 
	\end{description}
\end{problem}

\begin{theorem}[Hardness of optimization in \acp{QAOA}]\label{thm:QAOA}
	Problem~\ref{p:QAOA} is \NP-hard for $L=1$ layer.
\end{theorem}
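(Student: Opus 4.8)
The plan is to reduce from continuous \MaxCut\ (Problem~\ref{p:shift_maxcut}) by recasting the single-layer construction of Theorem~\ref{t:unitary} into the rigid QAOA template, in which the initial state is forced to be the ground state of $H_b$ and the measured observable must be $H_c$ itself. Two features of this template are essentially forced on us. First, since $\sandwich{\Psi}{H_c}{\Psi}$ is confined to the spectral interval of $H_c$, the only way the minimum can equal $-\MCA$ is to take $H_c$ to be the \MaxCut\ cost observable $O$ of Theorem~\ref{thm:VQA2} (Eqs.~\eqref{eq:defOprime}--\eqref{eq:Observable_loc_VQA}), whose smallest attainable expectation value is $-\MCA$. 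Second, the first evolution $\e^{-\i\gamma H_c}$ commutes with the measured $H_c$, so at $\beta=0$ the objective is independent of $\gamma$; the nontrivial phase sweep must therefore be produced by the mixer evolution $U_b(\beta)=\e^{-\i\beta H_b}$. Hence I would let $H_b$ carry the ergodic frequencies $E_v=2\pi/m^v$ of Lemma~\ref{l:ph_to_en}, acting on the spin registers so that $U_b(\beta)$ implements the rotations $\phi_v=E_v\beta$ that generate $\obfun$ in Eq.~\eqref{eq:VQA}.

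The key steps, in order, are: (i) fix the $\CC^{2d}=\CC^d\otimes\CC^2$ encoding of Theorem~\ref{thm:VQA2} and set $H_c:=O$; (ii) choose $H_b$ so that $U_b(\beta)$ rotates the spin of each vertex $v$ around the $z$-axis at rate $E_v$, reproducing the swept state $\ket{\Psi(\vec E\beta)}$; (iii) verify that the ground state of $H_b$ is the fiducial state from which this sweep yields $\sandwich{\Psi(\beta,\gamma)}{H_c}{\Psi(\beta,\gamma)}=\obfun(\vec E\beta)$, with the second parameter $\gamma$ either redundant or optimized to the value that prepares the required superposition; and (iv) invoke ergodicity of $\{E_v\}$ via Lemma~\ref{l:ph_to_en} to conclude that $\min_{\beta,\gamma}\sandwich{\Psi}{H_c}{\Psi}$ approximates $\min_{\vec\phi}\obfun(\vec\phi)=-\MCA$ to arbitrary precision. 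A polynomial-time solver for Problem~\ref{p:QAOA} would then solve \MaxCut, establishing \NP-hardness already for $L=1$.

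The main obstacle is the tension hidden in steps (ii)--(iii): reproducing $\obfun$ requires a state supported uniformly over all $d$ vertices with each spin in $\ket{+}$, whereas a generator that rotates each spin around $z$ at a vertex-dependent rate $E_v$ is block diagonal in the vertex register and hence has a ground state localized on a single vertex, not the uniform, vertex-symmetric state that is needed. Uniformity over vertices calls for a vertex-independent $H_b$, while the ergodic sweep calls for vertex-dependent frequencies, and these two demands cannot both be met by the single mixer $H_b$. I expect to resolve this by decoupling the two roles --- for instance by adjoining a control register so that $H_b$ prepares the uniform, vertex-symmetric ground state while the vertex-dependent $z$-rotations are carried by a conditional term activated through the interplay of $U_c(\gamma)$ and $U_b(\beta)$ --- and then checking that the resulting two-parameter landscape still has its global minimum at $-\MCA$. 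Constructing and analyzing this gadget, rather than the ergodicity estimate (immediate from Lemma~\ref{l:ph_to_en}) or the spectral bookkeeping, is where the real work lies.
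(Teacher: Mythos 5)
Your diagnosis of the structural tension is exactly right, and the ergodic-frequency sweep via Lemma~\ref{l:ph_to_en} is indeed the engine of the paper's proof. But there is a genuine gap, and it is not a small one: the gadget you defer (``where the real work lies'') \emph{is} the theorem, and your premise (i) --- that $H_c$ must be exactly the \MaxCut\ observable $O$ --- is both unjustified and provably incompatible with your own resolution sketch. It is unjustified because a many-one reduction does not need the QAOA optimum to equal $-\MCA$; it only needs to be an efficiently computable function of $\MCA$ (compare Theorem~\ref{thm:QAOA2}, where the optimum is $1-2\MCA/\nE$). It is fatal because of the following: suppose, as you propose, the ground state $\ket{\Psi_0}$ of $H_b$ is the uniform vertex-symmetric state (tensored with any control state) and $H_c=O$ acts trivially on the control register. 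By the diagonal correction in Eq.~\eqref{eq:Observable_loc_VQA} the uniform state lies in the kernel of $O$, so $U_c(\gamma)\ket{\Psi_0}=\ket{\Psi_0}$ for all $\gamma$; and since $\ket{\Psi_0}$ is an eigenstate of $H_b$, $U_b(\beta)\ket{\Psi_0}$ is a global phase. The objective is then identically $\bra{\Psi_0}O\ket{\Psi_0}=0$ --- a perfectly flat landscape, for any number of layers, from which nothing can be reduced. Any ``conditional term'' that activates the vertex-dependent rotations must be switched on by $U_c(\gamma)$, i.e., $H_c$ must contain a register-flipping coupling beyond $O$, which is precisely what premise (i) forbids.

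The paper resolves the tension in the opposite direction from your sketch, by giving up both of your ``forced'' features at once. It works on $\CC^{2d+1}$, keeps $H_b=\diag(E_1,-E_1,\dots,E_d,-E_d,-1)$ diagonal with the ergodic frequencies, and makes the ground state of $H_b$ an \emph{ancilla} level $\ket{2d+1}$ (a localized ground state is fine). The state preparation is delegated to the cost Hamiltonian itself: $H_c=O\oplus 0+\tau\left(\ketbra{+_{2d}}{2d+1}+\ketbra{2d+1}{+_{2d}}\right)$, where $\ket{+_{2d}}=\frac{1}{\sqrt{2d}}\sum_{j=1}^{2d}\ket{j}$ and $\tau$ is small. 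Because both $\ket{2d+1}$ and $\ket{+_{2d}}$ are null vectors of $O\oplus 0$, their span is invariant under $H_c$, on which $H_c$ acts as $\tau\sigma_x$; hence $U_c(\gamma)$ is an exact Rabi rotation of the ancilla into $\ket{+_{2d}}$, after which $U_b(\beta)$ performs the ergodic sweep. One gets $\OEx{\beta,\gamma}=\sin^2(\tau\gamma)\,\obfun(\vec E\beta)+\LandauO(\tau)$, with the cross term vanishing exactly at $\gamma=\pi/(2\tau)$; since $\obfun(\vec E\beta)\le 0$, minimizing over $(\beta,\gamma)$ recovers $\min_{\vec\phi}\obfun(\vec\phi)=-\MCA$ to arbitrary precision by Lemma~\ref{l:ph_to_en}. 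So while your outline (reduce from Problem~\ref{p:shift_maxcut}, sweep with ergodic frequencies in one layer) matches the paper, the decisive ancilla-plus-weak-coupling construction is absent from your proposal, and retaining premise (i) would prevent you from ever finding it.
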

\begin{proof}
	We will perform a reduction from single layer \ac{VQA} to \ac{QAOA}, which implies that Problem~\ref{p:QAOA} is \NP-hard.
	
	We consider the Hilbert space $\H = \CC^{2d+1}$. 
	For $H_b$ we take
	\begin{align}
	H_b&=\mathrm{diag}(E_1,-E_1,E_2,-E_2,\dots ,E_d,-E_d,-1)\, ,
	\end{align}
	where $|E_i|<1$ for all $i\in[d]$.  For $H_c$
	\begin{align}
	H_c&=O\oplus 0 +\tau\left(\ket{+_{2d}}\bra{2d+1}+\ket{2d+1}\bra{+_{2d}}\right)\, ,
	\end{align}
	where $\ket{+_{2d}}=\sum_{j=1}^{2d}\ket{j}/\sqrt{2d}$, $\tau$ is some real constant that we adjust later and the observable $O$ is as defined in Eq.~\eqref{eq:Observable_loc_VQA}. $O\oplus 0$ refers to $O$ being embedded in the first $2d$ computational states in the Hilbert space.
	By design, $\lambda_{\min}(H_b)=-1$ is the ground state energy with ground state $\ket{2d+1}$. 
	\\
	For the state we obtain
\begin{equation}
	\begin{aligned}
	\ket{\Psi(\gamma)}&\coloneqq U_c(\gamma)\ket{2d+1}\\
	&=\cos(\tau\gamma)\ket{2d+1}+\i\sin(\tau\gamma)\ket{+_{2d}}\\
		\end{aligned}
	\end{equation}
	after applying the first Hamiltonian, where we used that $\ket{+_{2d}}$ is an eigenstate of $O\oplus 0$. This gives the final state 
	\begin{equation*}
	\begin{aligned}
	\ket{\Psi(\beta,\gamma)}&=U_b(\beta)U_c(\gamma)\ket{2d+1}
	\\
	&=
	\cos(\tau\gamma)\e^{\i\beta}\ket{2d+1}\\&+\i\sin(\tau\gamma)\frac{1}{\sqrt{2d}}\sum_{j=1}^d\e^{-\i E_j\beta}\ket{2j-1}+\e^{\i E_j\beta}\ket{2j}\, .
	\end{aligned}
\end{equation*}
	From this variational state we derive the expectation value
	\begin{equation}\nonumber
		\begin{aligned}
	\OEx{\beta,\gamma}&=\bra{\Psi(\beta,\gamma)} H_c\ket{\Psi(\beta,\gamma)}\\
	&=\sin^2(\tau\gamma)f(\beta)+2\tau\cos(\tau\gamma)\sin(\tau\gamma)g(\beta)
	\end{aligned}
	\end{equation}
	with
	\begin{align}\label{eq:O_QAOA1}
	f(\beta)&=\frac{1}{4}\sum_{i,j}(A_{i,j}\left[\cos(E_i\beta)\cos(E_j\beta)-1\right]\, ,\\
	g(\beta)&=-\frac{\sin(\beta)}{d}\sum_{i=1}^d\cos(E_j\beta)\,.
	\end{align}
	For $\tau\ll1$ the contribution of $g$ becomes insignificant and $\gamma=\frac{\pi}{2\tau}$ minimizes the objective function as $f(\beta)\leq0$, meaning the problem is equivalent to minimizing $f(\beta)=\obfun(\vec E \beta )$ which approximates $\obfun(\vec\phi)$ to arbitrary precision  if $\vec E$ is chosen as in Lemma~\ref{l:ph_to_en} and therefore gives a reduction from Problem~\ref{p:shift_maxcut}.
\end{proof}

In the proof of Theorem~\ref{thm:QAOA}, the energies of $H_b$ span many orders of magnitude. This means a potential quantum computer needs to be incredibly precise to implement such a \ac{QAOA}.
We will show, that this is not required, but that also for very simple spectra, hardness results can be obtained.
\begin{theorem}\label{thm:QAOA2}
	QAOA optimization (Problem~\ref{p:QAOA}) is \NP-hard for periodic optimization $\vec \beta,\vec\gamma\in [0,2\pi)^L$, even if we restrict $\|H_b\|\leq3$ and $\|H_c\|\leq3$ (i.e $E_i\in \{-3,-2,\dots,3\}$).
\end{theorem}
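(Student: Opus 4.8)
The plan is to reduce from the single-layer \ac{VQA} of Theorem~\ref{t:unitary} (equivalently, from Continuous \MaxCut, Problem~\ref{p:shift_maxcut}), while replacing the exponentially spread generator spectrum of Theorem~\ref{thm:QAOA} by integer energies. The observation that makes the periodicity constraint essentially free is the following: once every eigenvalue of $H_b$ and $H_c$ is an integer, the maps $\beta\mapsto\e^{-\i H_b\beta}$ and $\gamma\mapsto\e^{-\i H_c\gamma}$ are $2\pi$-periodic, since $\e^{-\i H 2\pi}=\Id$ whenever $\spec(H)\subset\ZZ$. Hence restricting $\vec\beta,\vec\gamma\in[0,2\pi)^L$ discards no reachable unitaries, and the entire task reduces to realizing a faithful reduction with a spectrum contained in $\{-3,\dots,3\}$ and norm at most $3$.

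The obstacle this creates is that with integer energies a single layer can no longer be ergodic: the curve $\phi\mapsto\obfun(\vec E\phi)$ becomes periodic and sweeps out only a one-dimensional subset of the torus, so the mechanism of Lemma~\ref{l:ph_to_en} is unavailable. I would compensate by spending $L=\poly(d)$ layers, which costs nothing because $L$ is supplied in unary. The idea is to keep the cost observable $O$ of Eq.~\eqref{eq:Observable_loc_VQA} inside $H_c$ but to reserve a single ``phasing slot'' acted on by $H_b$ through a diagonal term with eigenvalues in $\{-1,0,1\}$ on the pair structure of Eq.~\eqref{eq:VQA_U_small}, together with an integer, norm-bounded routing term assembled from transposition-type couplings $\ketbra{a}{b}+\ketbra{b}{a}$ (eigenvalues $\pm1$). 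Over successive layers the routing moves a fresh coordinate into the slot, so that layer $\ell$ imprints the tunable phase $\beta_\ell$ onto coordinate $\sigma(\ell)$ for a fixed bijection $\sigma$; this yields $d$ \emph{independently} adjustable angles $\phi_{\sigma(\ell)}=\beta_\ell$ even though the small integer energies are reused across coordinates. A distinguished ancillary level is given a diagonal energy $-1$, exactly as the state $\ket{2d+1}$ in Theorem~\ref{thm:QAOA}, so that $H_b$ has the prescribed unique ground state $\ket{\Psi_0}$.

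Collecting the imprinted phases, the final state is, up to the ancilla bookkeeping of Theorem~\ref{thm:QAOA}, the \ac{VQA} state $\ket{\Psi(\vec\phi)}$, and the measured $\OEx{\vec\beta,\vec\gamma}$ reduces to $\obfun(\vec\phi)$, which is \NP-hard to minimize over $[0,2\pi)^d$ by Lemma~\ref{lem:continuous_MaxCut}. As in the proof of Theorem~\ref{thm:QAOA}, I would introduce a weak coupling $\tau\ll1$ to separate the routing and initialization dynamics from the cost evaluation, so that at the optimum only the $\obfun$-term survives and the cross terms are suppressed in $\tau$, to be handled exactly as the $g(\beta)$ contribution there. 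The main obstacle is the routing step under the rigid norm bound: a clean cyclic shift has eigenvalues $2\pi k/d\notin\ZZ$, and a path coupling has spectrum $2\cos(\pi k/(d+1))\notin\ZZ$, so the conveyor must be built purely from matching-type couplings whose spectrum stays inside $\{-3,\dots,3\}$ while still cycling every coordinate through the slot in polynomially many layers; one must then check that the induced map $\vec\beta\mapsto\vec\phi$ is onto $[0,2\pi)^d$ modulo $2\pi$, that $\|H_b\|,\|H_c\|\le3$ survives after adding $O$ and the routing term, and that the residual cross terms create no spurious minima below $\min_{\vec\phi}\obfun(\vec\phi)$. Everything else is a direct computation paralleling Eq.~\eqref{eq:VQA} and Theorem~\ref{thm:QAOA}.
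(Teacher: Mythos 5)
Your first observation (an integer spectrum makes both evolutions $2\pi$-periodic, so restricting to $[0,2\pi)^L$ costs nothing) is correct and matches the intent of the theorem, and your ``phasing slot'' plan---spend polynomially many layers, with layer $\ell$ imprinting the tunable angle onto one fresh coordinate---is also the right high-level strategy. Indeed, the paper's proof (Appendix~\ref{ap:multiLQAOA}) builds exactly such a conveyor, but as a direct sum $\H=\H_1\oplus\cdots\oplus\H_{2d+1}$ of $2d+1$ copies of a space $\mc K$: $U_c(\gamma_k)$ transfers population from $\H_{2k-1}$ to $\H_{2k}$ and $U_b(\beta_k)$ from $\H_{2k}$ to $\H_{2k+1}$, using two-level transfer generators with spectra $\{0,1\}$, $\{-1,1\}$, $\{-2,0\}$, $\{0,2\}$, and the block $H_T^{(k)}$ of $H_b$ acts differently on basis states with $i=k$ or $j=k$, which is precisely what imprints $\beta_k$ on coordinate $k$. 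This settles the routing obstacle that you flagged but left open.

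The genuine gap lies in the one concrete mechanism you do commit to: the weak coupling $\tau\ll1$ borrowed from Theorem~\ref{thm:QAOA}. Under the present theorem's restriction the eigenvalues must lie in $\{-3,\dots,3\}$; a coupling of strength $\tau\ll 1$ introduces non-integer eigenvalues (of order $\pm\tau$), which not only violates that restriction but also destroys the $2\pi$-periodicity on which your own first paragraph relies---once the spectrum is not integer, confining $\vec\gamma$ to $[0,2\pi)^L$ is no longer free, and the ``choose $\tau$ small enough'' perturbative argument has no room to operate. The paper avoids any small parameter by an exact penalization argument instead: the amplitude reaching the top subspace $\H_{2d+1}$ carries the factor $g(\vec\beta,\vec\gamma)=\prod_k\sin^2(\beta_k)\sin^2(\gamma_k/2)$, the cost registered there is proportional to $f(\vec\beta)=\sum_{i,j}A_{i,j}\sin(\beta_i)\sin(\beta_j)=4\obfun(\vec\beta-\pi/2)+2\nE$, whose minimum is non-positive (since $\MCA\geq\nE/2$), while all amplitude left behind contributes $\braket{O_{\mathrm{rest}}}\geq 0$. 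Hence the global optimum \emph{exactly} maximizes the transfer ($\gamma_k=\pi$, $g=1$, $\braket{O_{\mathrm{rest}}}=0$) and then minimizes $\obfun(\vec\beta-\pi/2)$: leakage is penalized by sign, not suppressed by magnitude, and no limit is taken. Without replacing your $\tau$-device by an argument of this kind, your reduction does not meet the theorem's spectral constraint, and the remaining checks you list (surjectivity of $\vec\beta\mapsto\vec\phi$, absence of spurious minima from cross terms) are exactly the points that stay unproven.
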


\begin{proof}[Proof outline]
	In Appendix~\ref{ap:multiLQAOA}, we construct explicit Hamiltonians $H_b$ and $H_c$ from an adjacency matrix $A$, where the solution is $\left<H_c\right>_{\min}=1-\frac{2\MCA}{\nE}$ and $\|H_c\|=1$. We do this by embedding a modified version of Problem~\ref{p:shift_maxcut} into the \ac{QAOA} circuit in such a way, that deviations from the intended structure are penalized by increasing the expectation value. 
\end{proof}
From this we can derive bounds on the optimization errors for \ac{VQA} in this restricted setting. Here, we are unable to use the same boosting technique to increase the hardness result further.
\begin{corollary}
	All polynomial time algorithms for \Ac{QAOA} and therefore \ac{VQA} optimization (Problems~\ref{p:QAOA} and \ref{p:VQA}) have an optimization error $\apval\geq \frac{1-\alpha_{\max}}{2}$, where $\alpha_{\max}$ is the approximation ratio of \MaxCut.
\end{corollary}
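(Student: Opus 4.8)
The plan is to turn the optimization error into a \MaxCut\ approximation ratio on the instances built for Theorem~\ref{thm:QAOA2}, and then invoke the \APX-hardness of \MaxCut. First I would record the three features of that construction that I need: the global minimum over the ansatz is $\braket{H_c}_{\min}=1-\tfrac{2\MCA}{\nE}$, and since this is attained there is \emph{no model mismatch}, so the displayed minimum really is $\braket{H_c}_{\min}$; and $\|H_c\|=1$, which forces $\lambda_{\max}(H_c),\lambda_{\min}(H_c)\in[-1,1]$ and hence $\sw(H_c)\le 2$.

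Next I would run the given polynomial time optimizer on such an instance and round its output back to an honest bipartition, exactly as in the single-coordinate cosine argument in the proof of Lemma~\ref{lem:continuous_MaxCut}. This yields a cut of value $\MCAa$ with $1-\tfrac{2\MCAa}{\nE}\le \braket{H_c}_a$, since rounding never increases the objective. Subtracting $\braket{H_c}_{\min}=1-\tfrac{2\MCA}{\nE}$ gives
\begin{equation}
\braket{H_c}_a-\braket{H_c}_{\min}\ge \frac{2(\MCA-\MCAa)}{\nE}\,.
\end{equation}

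Finally I would convert this into the approximation ratio $\alpha=\MCAa/\MCA$. By Definition~\ref{def:approx_val} the optimization error is at least the per-instance ratio $\bigl(\braket{H_c}_a-\braket{H_c}_{\min}\bigr)/\sw(H_c)$; dividing the previous display by $\sw(H_c)\le 2$ and using the elementary bound $\MCA\ge \nE/2$ (a uniformly random bipartition cuts half the edges in expectation) gives
\begin{equation}
\apval\ge \frac{\MCA-\MCAa}{\nE}=\frac{\MCA}{\nE}\,(1-\alpha)\ge \frac{1-\alpha}{2}\,.
\end{equation}
Since \MaxCut\ is \APX-hard, for every polynomial time algorithm there are instances on which $\alpha\le\alpha_{\max}$ provided $\P\neq\NP$; on such an instance the right-hand side is at least $\tfrac{1-\alpha_{\max}}{2}$, and as the optimization error is the supremum of the per-instance ratio we conclude $\apval\ge \tfrac{1-\alpha_{\max}}{2}$. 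Because every \ac{QAOA} instance of Problem~\ref{p:QAOA} is in particular a \ac{VQA} instance of Problem~\ref{p:VQA} (alternating generators $H_c,H_b$ with observable $H_c$), the same bound transfers.

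I expect the only delicate step to be the rounding in the second paragraph: I must verify that the continuous \ac{QAOA} output of the Theorem~\ref{thm:QAOA2} construction maps back to a genuine cut without increasing $\braket{H_c}$, mirroring Lemma~\ref{lem:continuous_MaxCut}. The factor of two is then unavoidable in this direct argument, since it arises from simultaneously using $\sw(H_c)\le 2$ and $\MCA\ge\nE/2$; unlike Corollary~\ref{cor:apvalexp}, the norm restriction $\|H_b\|,\|H_c\|\le 3$ together with the periodicity of $\vec\beta,\vec\gamma$ blocks the tensor-power boosting that would otherwise amplify $\apval$.
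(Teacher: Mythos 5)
Your proposal is correct and follows essentially the same route as the paper's proof: use the Theorem~\ref{thm:QAOA2} construction with $\braket{H_c}_{\min}=1-\tfrac{2\MCA}{\nE}$ and spectral width at most $2$, convert the per-instance optimization error into the \MaxCut\ approximation ratio via $\MCA\geq\nE/2$, and invoke \APX-hardness of \MaxCut. The only difference is presentational: you make explicit the rounding of the continuous \ac{QAOA} output to an honest cut (which the paper handles with the closing remark of Appendix~\ref{ap:multiLQAOA} that any approximation of this \ac{QAOA} yields a lower bound on $\MCA$) and the trivial embedding of Problem~\ref{p:QAOA} instances into Problem~\ref{p:VQA}, both of which the paper asserts without elaboration.
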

\begin{proof}
	For the Hamiltonians in proof of Theorem~\ref{thm:QAOA2}, we have $\sw(H_c)=2$ and the lowest achievable expectation value is $\left<H_c\right>_{\min}=1-\frac{2\MCA}{\nE}$, where $\nE$ is the number of edges of the the graph. From this we can calculate an upper limit on the possible guaranteed precision of an optimization algorithm for all instances
	\begin{equation}
	\begin{aligned}
	\apval\geq&\sup_A\left(\frac{\left|\left<H_c\right>_{\min}-\left<H_c\right>_{\mathrm{a}}\right|}{\sw(H_c)}\right)\\
	&\geq\frac{1}{2}\sup_A\left|1-\frac{2\MCA}{\nE}-\left(1-\frac{2|\MCAa|}{\nE}\right)\right|\\
	&=\frac{1}{2}\sup_A\left(\frac{2\MCA}{\nE}-\frac{2|\MCAa|}{\nE}\right)\\
	&=\frac{1}{2}(1-\alpha)\frac{2\MCA}{\nE}\\
	&\geq \frac{1-\alpha}{2}\, ,
	\end{aligned}
	\end{equation}
	where  the supremum goes over all adjacency matrices and $|\MCAa|$ is the approximation of \MaxCut\ from the algorithm and $\alpha$ is the approximation ratio of the algorithm;  
	in the last step we used that $\MCA\geq \frac{\nE}{2}$. 
	This means that if $\P\neq\NP$, any polynomial time algorithm is only able to guarantee \ac{QAOA} and therefore general \ac{VQA} minimization to an optimization error $\apval\geq\frac{1-\alpha_{\max}}{2}$. 
\end{proof}
For gradient based methods this means $\apval\geq 1/4$ for logarithmically many qubits, as $\alpha = 1/2$ was shown.

\subsection{Free fermionic models}
\label{sec:VQAopt_ff}
Free fermionic models are a certain class of fermionic many-body systems that are without actual particle-particle interactions. 
They are especially interesting for us, as they can be simulated efficiently for so-called Gaussian input states and observables. 

Fermionic creation and annihilation operators are denoted by $c_j^\dagger$ and $c_j$. 
They satisfy the anticommutation relations $\{c_i^\dagger,c_j\}=\delta_{i,j}$ and $\{c_i,c_j\}=0$ for all $i,j$. 
We call an operator \emph{quadratic} or \emph{Gaussian} if it is a quadratic polynomial in the creation and annihilation operators. 
We will consider \emph{(balanced) quadratic observables} of the form 
\begin{equation}\label{eq:quadraticO}
	H = \sum_{i,j} h_{i,j}\, c^{\dagger}_{i}c_{j} 
\end{equation}
and will call $h$ the \emph{coefficient matrix} of $H$, which is Hermitian. 
Also, in the following, we denote operators by capital and their respective coefficient matrices by lowercase letters.

A quantum state is \emph{Gaussian} if it can be arbitrarily well approximated by a thermal state of a quadratic Hamiltonian. 
For a Hamiltonian $H$ we denote its ground state by
\begin{equation}\label{eq:gs}
	\rho[H]
= 
\lim_{\beta\rightarrow \infty }\frac{\e^{-\beta H}}{\Tr[\e^{-\beta H}]}\,.
\end{equation}
From this we can define the \ac{VQA} problem in the free fermionic setting.
\begin{problem}[\Ac{VQA} minimization problem, free fermions]\label{p:VQA_fermionic}\hfill\vspace{-.8\baselineskip} 
\begin{description}[noitemsep,leftmargin=0.5cm,font=\normalfont]
\item [Instance] 
Coefficient matrices $h_{\gs}, h_1, \dots, h_L, o \in \Herm(\CC^n)$. 
\item[Task]
The coefficient matrices define quadratic observables $H_\gs, H_1, \dots, H_L$ and $O$ via \eqref{eq:quadraticO} and 
$\rho_\gs= \rho[H_0]$. 
For the evolved state 
\[\rho(\vec \phi)\coloneqq U_L(\phi_L)\cdots U_1(\phi_1)\rho_{\gs}U^\dagger_1(\phi_1)\cdots U^\dagger_L(\phi_L)\,,\]
with $U_i(\phi)=\e^{-\i H_i \phi}$,
find a $\vec \phi \in \RR^L$ that minimizes $\OEx{\vec\phi}\coloneqq \Tr[O\rho(\vec\phi)]$. 

		\end{description}
\end{problem}

\begin{theorem}
	Problem~\ref{p:VQA_fermionic} is \NP-hard, even if the initial state $\rho_{\gs}$ is pure. 
\end{theorem}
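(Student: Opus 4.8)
The plan is to reduce from the continuous \MaxCut\ problem (Problem~\ref{p:shift_maxcut}) by reinterpreting the single-particle construction of Theorem~\ref{thm:VQA2} as a free-fermionic \ac{VQA} carrying exactly one fermion. The guiding principle is that for number-conserving quadratic generators and observables acting on a single-particle Gaussian initial state, the free-fermionic expectation value collapses onto the first-quantized expectation value $\sandwich{\Psi(\vec\phi)}{O}{\Psi(\vec\phi)}$ already analyzed there, so that \NP-hardness is inherited without any new combinatorics.

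First I would record the Heisenberg covariance of the mode operators. For a quadratic generator $H=\sum_{i,j}h_{i,j}\,c_i^\dagger c_j$ and $U(\phi)=\e^{-\i H\phi}$ one has $U(\phi)^\dagger c_k\,U(\phi)=\sum_l\bigl(\e^{-\i h\phi}\bigr)_{k,l}\,c_l$, so conjugating the quadratic observable $O$ with coefficient matrix $o$ through the whole circuit $U_L(\phi_L)\cdots U_1(\phi_1)$ again yields a quadratic observable, now with coefficient matrix $w^\dagger o\,w$, where $w\coloneqq\e^{-\i h_L\phi_L}\cdots\e^{-\i h_1\phi_1}$ is the composition of the single-particle unitaries. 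Because all the $H_i$ and $O$ are particle-number conserving, the evolution preserves the single-particle sector, and contracting $w^\dagger o\,w$ against the correlation matrix $\Gamma_{i,j}\coloneqq\Tr[c_i^\dagger c_j\,\rho_\gs]$ of the initial state gives $\OEx{\vec\phi}=\sum_{i,j}(w^\dagger o\,w)_{i,j}\,\Gamma_{i,j}$.

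Next I would take the initial state to be the single fermion $\ket{\Psi_0}=\tfrac{1}{\sqrt{2d}}\sum_{j=1}^{2d}c_j^\dagger\ket{0}$, whose correlation matrix is the rank-one matrix $\Gamma_{i,j}=\overline{a_i}\,a_j$ formed from the amplitude vector $\vec a=\tfrac{1}{\sqrt{2d}}\onevec\in\CC^{2d}$. With this choice the contraction above simplifies to $\OEx{\vec\phi}=(w\vec a)^\dagger o\,(w\vec a)=\sandwich{\Psi(\vec\phi)}{O}{\Psi(\vec\phi)}$, exactly the first-quantized expectation value of Theorem~\ref{thm:VQA2}. Crucially, this $\ket{\Psi_0}$ is pure and Gaussian and can be realized as $\rho_\gs=\rho[H_\gs]$ by choosing the coefficient matrix $h_\gs=\1-2\,\vec a\vec a^\dagger$: it has the single negative eigenvalue $-1$ on the mode $\vec a$ and eigenvalue $+1$ on the orthogonal complement, so the $\beta\to\infty$ limit in~\eqref{eq:gs} fills precisely the mode $\vec a$ and returns the pure Slater determinant $\ket{\Psi_0}$ with no fractional occupations. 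Finally, taking $o$ and $h_1,\dots,h_d$ to be the Hermitian $2d\times2d$ matrices from Eqs.~\eqref{eq:Observable_loc_VQA} and~\eqref{eq:VQA_U_small} and $L=d$ layers reproduces $\OEx{\vec\phi}=\obfun(\vec\phi)$, completing the many-one reduction from Problem~\ref{p:shift_maxcut}.

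I expect the main obstacle to lie entirely in the setup rather than the combinatorics: one must verify carefully that the second-quantized problem genuinely reduces to the single-particle picture --- the Heisenberg covariance of $c_k$, the conservation of particle number along the whole circuit, and the rank-one form of $\Gamma$ --- and, most delicately, that the pure initial state can be written as the ground state $\rho[H_\gs]$ of a quadratic Hamiltonian whose spectrum has a clean sign gap, so that no degeneracy-induced zero modes turn $\rho_\gs$ into a mixed state. Once this single-particle reduction is established, \NP-hardness follows verbatim from the analysis of $\obfun$ in Lemma~\ref{lem:continuous_MaxCut}.
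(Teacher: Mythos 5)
Your proposal is correct and takes essentially the same route as the paper's proof: both reduce from Problem~\ref{p:shift_maxcut} by realizing the logarithmic-qubit construction of Theorem~\ref{thm:VQA2} in the single-particle sector of $2d$ fermionic modes, using one fermion in the uniform mode as the pure Gaussian initial state (rank-one correlation matrix $\vec 1/(2d)$) together with the same generators and observable, so that $\OEx{\vec\phi}=\obfun(\vec\phi)$. Your choice $h_\gs=\1-2\vec a\vec a^\dagger$ even gives exactly the spectrum $(-1,1,\dots,1)$ that the paper asserts for its $h_\gs=\1-\vec 1/n$ (which as literally written has spectrum $(0,1,\dots,1)$), so your ground-state realization is, if anything, slightly tidier.
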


\begin{proof}
	We prove the theorem via a reduction of Problem~\ref{p:shift_maxcut} to Problem~\ref{p:VQA_fermionic}. 
	Therefore, we consider a Hermitian adjacency matrix $A \in \{0,1\}^{d\times d}$. 

	For the \ac{VQA} setup, we use  $n=d\times2 $ fermionic modes $c_{i}$ with $i \in [2d]$ and $L=d$ layers. 
	To encode Problem~\ref{p:shift_maxcut} we define $h_{\gs},\{h_i\}_{i\in [L]},o\in \mathrm{Herm}(\C^{2d\times 2d})$ as follows:
	\begin{align}
	h_\gs&=\left(\1-\frac{\vec 1}{n}\right), 
	\\
	h_i&=\vec E_i\otimes \left(\begin{matrix} 1&0\\0&-1\end{matrix}\right)\ , 
	\quad i\in[d]\, ,
	\end{align}
	where $\vec 1_{a,b}=1$ and $E_{i;a,b}=\delta_{i,a,b}$ (Kronecker delta) for all $i,a,b$. 
	The coefficient matrix $o$ is given by the matrix $O$ defined in Eqs.~\eqref{eq:defOprime} and \eqref{eq:Observable_loc_VQA}, which is used for the encoding of the adjacency matrix $A$. 

	We define $\cov_{i,j}\coloneqq\Tr(c_j^\dagger c_i\rho_\gs)$ to be the correlation matrix of $\rho_\gs$, which can be evaluated to  $\cov=\vec 1/(2d)$ using the identity \eqref{eq:ap_cov}. 
	As the eigenvalues of $h_\gs$ are $\lambda=(-1,1,\cdots,1)$, $\rho_{\gs}$ describes a pure state, cp.\ Appendix~\ref{ap:therm_st}. 
	From Eq.~\eqref{eq:ap_heisev} we obtain the coefficient matrix of $O(\vec\phi)$ in the Heisenberg picture as
	\begin{align}
		o(\vec \phi)=\e^{\i h_d\phi_d}\cdots\e^{\i h_1\phi_1}o\,\e^{-\i h_1\phi_1}\cdots\e^{-\i h_d\phi_d}\,. 
	\end{align}
	With these prerequisites we can derive the following expectation value:
	\begin{equation}
	\begin{aligned}
		\OEx{\vec\phi}&=\Tr(O(\vec\phi)\rho_{\gs})\\
		&=\Tr\left(\sum_{i,j=1}^{2d}o(\vec\phi)_{i,j} c_i^\dagger c_j \rho_{\gs}\right)\\
		&=\sum_{i,j}o(\vec\phi)_{i,j} \cov_{j,i}\\
		&=\frac{1}{2d}\sum_{i,j}o(\vec\phi)_{i,j}\\
		&=\frac{1}{4}A_{i,j}\left(\cos(\phi_i)\cos(\phi_j)-1\right)=\obfun(\vec\phi)\, ,
	\end{aligned} 	
	\end{equation}
	where the last step analogously follows Eq.~\eqref{eq:VQA}. 
	As this gives the objective function from Problem~\ref{p:shift_maxcut}, this completes the desired reduction. 
\end{proof}


\section{Conclusion and outlook}
Our results show that classical training poses challenge in \ac{VQA} based hybrid quantum computations. 
Not only is optimizing \ac{VQA} algorithms \NP-hard, but also that no polynomial time algorithm can have an optimization error $\apval<1$ in all instances (assuming $\P\neq \NP$).
Additionally, for significantly simpler systems, such as those composed of logarithmically many qubits or free fermions, the hardness results already hold. 
This also shows that hardness does not merely derive from the ground state problem. We have extended these results further to optimization on a single layer of gates, to continuous unitary time evolution and to \ac{QAOA} problems. 

We encoded \NP-hard problems into local extrema of the optimization landscape of \ac{VQA} problems. 
Gradient descent type optimization and also higher order methods can converge to any local minimum, determined mostly by the initialization. From this we could explicitly show, that even for logarithmically many qubits, these methods have an approximation error of $\apval\geq \frac{1}{4}$. For our particular \ac{VQA}, this is significantly worse than what modern efficient \MaxCut\ solvers can guarantee. 
This emphasizes the need for effective initialization procedures for \ac{VQA} algorithms and poses the challenge of finding non-local heuristics for \ac{VQA} optimization to overcome the problem of these persistent local minima to reach smaller optimization errors.

In order to put our results into perspective, we briefly compare them to other hardness results for relevant optimization problems. 
For instance, optimization within the \ac{DMRG} method is \NP-hard \cite{Eisert06ComputationalDifficultyOf}. 
However, hardness holds only for errors scaling as the inverse of the bond dimension and there are variants where convergence can be rigorously guaranteed \cite{LanVazVid15}. 
\ac{VQA} optimization is arguably more similar to the optimization in the Hartree-Fock method. 
Despite being \NP-hard \cite{Schuch09ComputationalComplexityOf} it is widely used in many practical calculations. 
It is our hope that this work helps of identify and overcome optimization challenges also for practically relevant \ac{VQA} problems.

\section*{Acknowledgments}

We thank David Wierichs, Sevag Gharibian, Raphael Brieger and Thomas Wagner for helpful comments on our manuscript and Jens Watty,  Christian Gogolin and David Gross for fruitful discussions on the nature of \acp{VQE} and \acp{QAOA}. We also thank the anonymous Referee B for valuable comments, which have helped us to improve this paper.  This  work  was  supported  by  the  Deutsche Forschungsgemeinschaft   (DFG,   German   Research Foundation) via the Emmy Noether program (Grant No. 441423094) and by the German Federal Ministry of Education and Research (BMBF) within the funding program ``Quantum technologies—From basic research to  market'' in  the  joint  project  MANIQU  (Grant No. 13N15578).


\section*{Appendices}
\appendix       
\section{Proof of Lemma~\ref{l:ph_to_en} on ergodic energy spectra}
\label{ap:erd_en}
Starting from the definition of $\vec E$
\begin{equation}
E_i\coloneqq \frac{2\pi}{m^i}\, ,
\end{equation}
let $\vec \phi \in [0,2\pi)^n$ be the desired phase vector. For this we define  
\begin{equation}
s_i\coloneqq \left\lfloor\frac{\phi_i m}{2\pi}\right\rfloor \in \{0, \dots, m-1\}\, 
\end{equation}
and 
\begin{equation}
t(\vec s)\coloneqq \sum_{j=1}^{n} s_{j} m^{j-1}	\in\{0,\dots,m^n-1\}\, . 
\end{equation}
Then 
\begin{equation}
\begin{aligned}
& \phi_i - E_it
\\
&=
\left(\phi_i - \frac{2\pi s_i}{m}\right) + \left(\frac{2\pi s_i}{m} - 2\pi\sum_{j=1}^n s_{j}m^{j-1-i} \right)
\\
&=
\left(\phi_i - \frac{2\pi s_i}{m}\right) - \underbrace{\left(2\pi\sum_{j=1}^{i-1} s_{j}m^{j-1-i} \right)}_{\left|\argdot\right|\leq 2\pi/m}
\\
& \qquad \qquad \qquad\qquad
- \underbrace{\left( 2\pi\sum_{j=i+1}^n s_{j}m^{j-1-i} \right)}_{\in 2\pi \ZZ}\, .
\end{aligned}
\end{equation}
Hence, 
\begin{equation}
\modnorm{\phi_i - E_it} \leq \frac{4\pi}{m}\, ,
\end{equation}
which is what we wanted to show.
\qed

\section{Proof of Theorem~\ref{thm:QAOA2} on multilayer QAOAs}
\label{ap:multiLQAOA}
Now we construct a many-one reduction from Problem~\ref{p:shift_maxcut} to a multilayer \ac{QAOA} optimization. For this purpose, we first define some useful objects.

Let $\mc K \coloneqq \CC^d \otimes \CC^d \otimes \CC^2 \otimes \CC^2$, where $d$ will be the size of an adjacency matrix to encode {\MaxCut} and the number of layers ($L=d$). 
We define a larger Hilbert space $\H$ as a direct sum 
$\H=\H_1\oplus\dots\oplus \H_{2d+1}$ 
with $\H_\ell\cong\mc K$
for $\ell\in [2d+1]$. 
We canonically identify each $\mc H_\ell$ with the corresponding subspace $\mc H_\ell\subset \H$ and denote the canonical basis states by 
$\{\ket{i,j,a,b}_\ell\}$, where $i,j\in[d]$, $a,b\in \{0,1\}$ and $\ell$ indicates the subspace $\H_\ell$. 
Next, we define four two-level unitary evolutions by 
\begin{align*}
\ket{\psi_0(\phi)}&=\e^{-\i\phi/2}\cos(\phi/2)\ket 1+\e^{-\i\phi/2}\sin(\phi/2)\ket 2 ,
\\
\ket{\psi_1(\phi)}&=\cos(\phi)\ket 1+\sin(\phi)\ket 2 ,
\\
\ket{\psi_2(\phi)}&=\e^{\i\phi}\cos(\phi)\ket 1+\i\,\e^{\i\phi}\sin(\phi)\ket 2 ,
\\
\ket{\psi_3(\phi)}&=\e^{-\i\phi}\cos(\phi)\ket 1-\i\,\e^{-\i\phi}\sin(\phi)\ket 2 ,
\end{align*}
which are generated as $\ket{\psi_i(\phi)}=\e^{-\i H_i\phi}\ket 1$ by
\begin{align*}
H_{0}&=\frac{1}{2}\begin{pmatrix}1 &-i \\i &1\end{pmatrix},
&H_1&=\begin{pmatrix}0 &-i \\i &0\end{pmatrix},
\\
H_2&=\begin{pmatrix}-1 &-1 \\-1 &-1\end{pmatrix},&
H_3&=\begin{pmatrix}1 &1 \\1 &1\end{pmatrix}, 
\end{align*} 
with eigenvalues $\{0,1\}$, $\{-1,1\}$, $\{-2,0\}$ and $\{0,2\}$, respectively. 
Based on these evolutions, we define \emph{transfer Hamiltonians} 
$H_T,H_T^{(\kappa)}\in \Herm(\mc K\oplus\mc K)$ 
as 
\begin{equation}
\begin{split}
H_T^{(\kappa)}
\coloneqq 
\sum_{i,j,a,b,x,y} &
\begin{cases}
H_{1\,x,y}& \text{ if } i=j\quad \mathrm{or}\quad  a=0,\\
H_{2\,x,y}& \text{ if } i=\kappa\quad \mathrm{or}\quad  j=\kappa ,b=0,\\
H_{3\,x,y}& \text{ if } j=\kappa ,b=1,\\
H_{1\,x,y}& \text{ otherwise }
\end{cases}
\\
&\times\ket{i,j,a,b}_x\bra{i,j,a,b}_y
\\
\phantom . 
\\
\phantom . 
\end{split}
\end{equation}
and 
\begin{equation}
H_T\coloneqq \sum_{i,j,a,b,x,y}H_{0\, x,y}\ket{i,j,a,b}_x\bra{i,j,a,b}_y
\end{equation}
with $x,y\in \{1,2\}$. 

Let \admat\ be the adjacency matrix of an unweighted graph with at least one edge.
We will construct $H_b$ such that it has the ground state
\begin{equation}
\ket{\mathrm{gs}_b}\coloneqq 
\frac{1}{2\sqrt{\sum_{i,j} A_{i,j}}}\sum_{i\neq j,a,b}A_{i,j}\ket{i,j,a,b} \in \mc K. 
\end{equation}
For this construction it will be helpful to denote 
\begin{equation}
H_{\mathrm{gs}}\coloneqq -3\ketbra{\mathrm{gs}_b}{\mathrm{gs}_b} .
\end{equation}
The solution of \MaxCut\ will be captured by the last subspace $\H_{2d+1}\subset \H$. 
For this we define $H_p\in \Herm(\mc K)$ as
\begin{equation}
H_p=\frac{1}{2}\sum_{i,j,a,b,\tilde a,\tilde b} \delta_{a\neq \tilde a}\ket{i,j,a,b}\bra{i,j,\tilde a,\tilde b}, 
\end{equation}
where $\delta_{a \neq \tilde a} \coloneqq 1-\delta_{a,\tilde a}$. 
Finally, we define $H_b,H_c\in \Herm(\H)$ as
\begin{equation}\label{eq:HbHc}
\begin{aligned}
H_b&= H_{\mathrm{gs}}\oplus H_T^{(1)}\oplus\cdots\oplus H_T^{(d)} ,
\\
H_c&=H_T\oplus\cdots\oplus H_T\oplus H_p\, ,
\end{aligned}
\end{equation}
where the ground state of $H_b$ is $\ket{\mathrm{gs}_b}_1$ given as the embedded state $\ket{\mathrm{gs}_b}_1 = \ket{\mathrm{gs}_b}\oplus 0 \in \H$. 
Similarly, $\ket{\mathrm{gs}_b}_\ell\in \H_\ell\subset \H$ is defined.  

For the first layer, this gives the state
\begin{widetext}
\begin{equation}
	\begin{aligned}
	\ket{\Psi_0}
	& = \ket{\mathrm{gs}_b}_1=\frac{1}{2\sqrt{\sum A_{i,j}}}\sum_{i\neq j,a,b}A_{i,j}\ket{i,j,a,b}_{1} ,
	\\
	U_{c}(\gamma_1)\ket{\Psi_0}
	&=\sin(\gamma_1/2)\e^{-\i\gamma_1/2}\ket{\mathrm{gs}_b}_{2}+\e^{-\i\gamma_1/2}\cos(\gamma_1/2)\ket{\mathrm{gs}_b}_{1},
	\\
	U_b(\beta_1)U_c(\gamma_1)\ket{\Psi_0}
	&=\frac{\sin(\beta_1)\sin(\gamma_1/2)\e^{-\i\gamma_1/2}}{2\sqrt{\sum A_{i,j}}}
	\sum_{i\neq j,a,b}A_{i,j}\e^{\i a(\delta_{i,1}(\beta_1+\frac{\pi}{2})+(-1)^{b}\delta_{j,1}(\beta_1+\frac{\pi}{2}))}\ket{i,j,a,b}_{3}+\ldots;
	\end{aligned}
\end{equation}
only the highest $\H_i$ subspace is shown, as this is the relevant one. 
	Applying all $d$ layers of the \ac{QAOA} gives
\begin{equation}
	\begin{aligned}
	\ket{\Psi(\vec \beta\vec\gamma)}
	&=
	U_b(\beta_d)U_c(\gamma_d) \dots U_b(\beta_1)U_c(\gamma_1)\ket{\Psi_0}
	\\
	&=\frac{\prod_k\sin(\beta_k)\sin(\gamma_k/2)\e^{-\i\gamma_k/2}}{2\sqrt{\sum A_{i,j}}}
	\sum_{i\neq j,a,b}A_{i,j}\e^{\i a(\beta_i+\frac{\pi}{2}+(-1)^{b}(\beta_j+\frac{\pi}{2}))}\ket{i,j,a,b}_{2d+1}+\ldots
	\end{aligned}
\end{equation}
	And thus the expectation value of $H_c$ becomes
	\begin{equation}
	\begin{aligned}
	\bra{\Psi(\vec \beta\vec\gamma)}H_c\ket{\Psi(\vec \beta\vec\gamma)}
	&=\frac{1}{2}\frac{\prod_k\sin^2(\beta_k)\sin^2(\gamma_k/2)}{4\sum A_{i,j}}\\&\times
	\sum_{i,j=1}^d2A_{i,j}\left(\cos(\beta_i+\beta_j+\pi)+\cos(\beta_i-\beta_j)+\cos(-\beta_i-\beta_j-\pi)+\cos(-\beta_i+\beta_j)\right)+\braket{O_{\mathrm{rest}}}
	\\
	&=\frac{1}{2}\frac{\prod_k\sin^2(\beta_k)\sin^2(\gamma_k/2)}{4\sum A_{i,j}}\sum_{i,j=1}^d8A_{i,j}\sin(\beta_i)\sin(\beta_j) +\braket{O_{\mathrm{rest}}}
	\\
	&\geq \frac{\prod_k\sin^2(\beta_k)\sin^2(\gamma_k/2)}{\sum A_{i,j}}
	\sum_{i,j=1}^dA_{i,j}\sin(\beta_i)\sin(\beta_j)\, ,
	\end{aligned}
	\end{equation}
\end{widetext}	
where we used that $A_{i,j}^2=A_{i,j}$ and $\braket{O_{\mathrm{rest}}}\geq 0$ denotes the expectation valueof $H_c$ within
$\H_1\oplus\cdots\oplus \H_{2d}$. 
The expression 
\begin{equation*}
f(\vec\beta)\coloneqq \sum_{i,j}A_{i,j}\sin(\beta_i)\sin(\beta_j)=4\obfun(\vec \beta-\pi/2)+2\nE
\end{equation*} is minimized for a shifted solution of Problem~\ref{p:shift_maxcut} with local extrema $\beta_i \in \{\pi/2,3\pi/2\}$.
Its minimum value is non-positive, as $\MCA\geq \nE/2$. 
This means that 
\begin{equation*}
	g(\vec \beta,\vec \gamma)\coloneqq \prod_k\sin^2(\beta_k)\sin^2(\gamma_k/2)
\end{equation*}
needs to be maximized. This can be achieved trivially by setting $\gamma_i = \pi$ for all $i$ and choosing $\vec \beta$ to be a local extremum, where the function evaluates to $1$. This also minimizes $\braket{O_{\mathrm{rest}}}=0$.
This means the problem is equivalent to minimizing $\obfun(\vec \beta-\pi/2)$, which completes the reduction from Problem~\ref{p:shift_maxcut}. Similarly, it follows that an algorithm approximating this \ac{QAOA} also returns a lower bound to $\MCA$.

Finally, we show the claimed norm bounds on $H_b$ and $H_c$ from Eq.~\eqref{eq:HbHc}. 
Direct calculations reveal that $\norm{H_{\mathrm{gs}}}=3$, $\norm{H_T^{(\kappa)}} = 2$, $\norm{H_T} = 1$ and $\norm{H_p} = 1$. 
Hence, $\norm{H_c} = 1$ and $\norm{H_b} = 3$. 
\qed

\section{Free fermions} 
\label{ap:therm_st}
In this section, we provide some basics on free fermions for the special case of particle number preserving Hamiltonians. 
Throughout, we consider $n$ fermionic modes with annihilation operators $c_1, \dots, c_n$. 

First, we explain how time evolution can be simulated efficiently. 
With the commutation relation $[c_i^\dagger c_j,c_k^{\dagger}c_l]=\delta_{j,k}c_i^\dagger c_l-\delta_{i,l}c_k^\dagger c_j$ 
the time evolution in the Heisenberg picture becomes
	\begin{align}
		\dot{O}&=\i[H,O]=\i\sum_{i,j=1}^n[h,o]_{i,j}c_i^\dagger c_j \, ,
	\end{align}
where $o$ and $h$ are again the coefficient matrices of $O$ and $H$, as in \eqref{eq:quadraticO}. 	
With $\dot{O}=\sum_{i,j=1}^n \dot o_{i,j} c_i^\dagger c_j$ we obtain
		\begin{align}
		\dot{o}&=\i[h,o]\, .
		\end{align} 
For $O(t) = \e^{\i H t} O \, \e^{-\i H t}$ this gives
\begin{equation}\label{eq:ap_heisev}
	o(t) = \e^{\i h t} o \,\e^{-\i h t}\, ,
\end{equation}
meaning that the Hilbert space unitary $\e^{\i H t}$ is represented by the unitary $n\times n$ matrix $\e^{\i ht}$ on the level of second moments.

Secondly we derive an expression for the covariance matrix $\cov$ for thermal states. Quadratic observables \eqref{eq:quadraticO} can be written in a normal form. 
This form can be obtained by observing that unitary mode transformations leave the commutation relations invariant: For 
\begin{equation}
	\tilde c_i=\sum_{j=1}^n u_{i,j}c_j
\end{equation}
with $u\in \U(n)$ being unitary matrix, 
\begin{align}
\{\tilde c_i^\dagger,\tilde c_j\}
=\sum_{k,l=1}^n u_{i,k}u_{j,l}^*\{c_k ,c_l^\dagger\} 
=\delta_{i,j} \, .
\end{align}
Hence, with basic linear algebra one can find a transformation $u\in \U(n)$ such that
\begin{equation}\label{eq:nf}
	H=\sum_{i,j=1}^n h_{i,j}c_i^\dagger c_j 
=
\sum_{i,j=1}^n \tilde h_{i,j}\tilde c_i^\dagger \tilde c_j 
=
\sum_{i=1}^n \lambda_i \tilde c_i^\dagger \tilde c_i \, ,
\end{equation}
where 
$h=u^\dagger \tilde h u$ and $\tilde h = \diag(\lambda)$.
This describe $n$ decoupled modes each with eigenenergies $E_i\in\{0,\lambda_i\}$. 
The total energy of an eigenstate is therefore $E=\sum_{i=1}^n E_i$. 
We note that the ground state energy is non-degenerate if $\lambda_i\neq 0$ for all $i\in[n]$. 
The normal form \eqref{eq:nf} allows us to write the partition function of a thermal state at inverse temperature $\beta$ as 
\begin{align}\label{eq:Z}
Z=\Tr[\exp(-\beta H)]=\prod_{i}(\e^{-\beta\lambda_i}+1) \, .
\end{align}
Hence, the covariance matrix of the corresponding thermal state w.r.t.\ $\{\tilde c_i\}$ is given by
\begin{align}
\tilde\cov(\beta)_{i,j}
&= 
\bigl< \tilde c_j^\dagger \tilde c_i\bigr>_\beta
\\
&=
-\delta_{i,j}\frac{\partial}{\partial (\beta\lambda_i)}\ln(Z)
\\
&=
\frac{\delta_{i,j}}{\e^{-\beta\lambda_i}+1} \, , 
\end{align}
where we have denoted the expectation value of the thermal state by $\langle \argdot \rangle_\beta$, 
used the normal form \eqref{eq:nf} in the second step and \eqref{eq:Z} in the last step. 
In compact notation, 
\begin{equation}
	\tilde \cov(\beta) 
	= 
	(\e^{-\beta \tilde h} + 1)^{-1} \, .
\end{equation}
 
Therefore, the covariance matrix w.r.t.\ $\{c_i\}$ is
\begin{align}
	\cov(\beta)_{i,j}&=\left< c_j^\dagger c_i\right>_\beta
	=
	\sum_{k,l=1}^n u_{k,i}^*u_{l,j} 
	\left<\tilde c_l^\dagger \tilde c_k\right>_\beta\, ,
\end{align}
that is
\begin{equation}\label{eq:ap_cov}
		\cov(\beta)
	=
	u^\dagger\tilde \cov u=\left(\e^{-\beta h}+\1\right)^{-1}\, .
\end{equation}

\begin{acronym}[POVM]\itemsep.5\baselineskip
	\acro{NISQ}{noisy and intermediate scale quantum}
	\acro{VQE}{variational quantum eigensolver}
	\acro{VQA}{variational quantum algorithm}
	\acro{QAOA}{quantum approximate optimization algorithm}
	\acro{DMRG}{density matrix renormalization group}
	\acro{POVM}{positive operator valued measure}
	\acro{PVM}{projector-valued measure}
	\acro{CP}{completely positive}
	\acro{CPT}{completely positive and trace preserving}
	\acro{DFE}{direct fidelity estimation} 
	\acro{MUBs}{mutually unbiased bases} 
	\acro{SIC}{symmetric, informationally complete}
	\acro{SFE}{shadow fidelity estimation}
	\acro{RB}{randomized benchmarking}
	\acro{AGF}{average gate fidelity}
	\acro{XEB}{cross-entropy benchmarking}
	\acro{SPAM}{state preparation and measurement}
	\acro{TV}{total variation}
	\acro{HOG}{heavy outcome generation}
	\acro{BOG}{binned outcome generation}
	\acro{QPT}{quantum process tomography}
	\acro{GST}{gate set tomography}
	\acro{MW}{micro wave}
	\acro{rf}{radio frequency}
\end{acronym}

\bibliographystyle{./myapsrev4-1}
\bibliography{mk,lenbib,other} 

\end{document}